\newtheorem{theorem}{Theorem}[section]
\newtheorem{definition}{Definition}[section]
\newtheorem{lemma}[theorem]{Lemma}
\newtheorem{corollary}[theorem]{Corollary}
\begin{document}

\begin{frontmatter}



\title{Distributed Approximation Algorithms for Steiner Tree in the \\ $\mathcal{CONGESTED}$ \ $\mathcal{CLIQUE}$ model}


\author{Parikshit Saikia$^{1}$ and Sushanta Karmakar$^{1}$}
\address{\texttt{s.parikshit@iitg.ac.in, sushantak@iitg.ac.in}\\
		$^{1}$Department of Computer Science and Engineering\\
		Indian Institute of Technology Guwahati, India, 781039\\
}

\begin{abstract}
The \emph{Steiner tree} problem is one of the fundamental and classical problems in combinatorial optimization. In this paper we study this problem in the $\mathcal{CONGESTED}$ $\mathcal{CLIQUE}$ model of distributed computing and  present two deterministic distributed approximation algorithms for the same. The first algorithm computes a Steiner tree in $\tilde{O}(n^{1/3})$ rounds and $\tilde{O}(n^{7/3})$ messages for a given connected undirected weighted graph of $n$ nodes. Note here that $\tilde{O}(\cdot)$ notation hides polylogarithmic factors in $n$. The second one computes a Steiner tree in $O(S + \log\log n)$ rounds and $O(S (n - t)^2 + n^2)$ messages, where $S$ and $t$ are the \emph{shortest path diameter} and the number of \emph{terminal} nodes respectively in the given input graph. Both the algorithms admit an approximation factor of $2(1 - 1/\ell)$,  where $\ell$ is the number of terminal leaf nodes in the optimal Steiner tree.  For  graphs with $S = \omega(n^{1/3} \log n)$, the first algorithm exhibits  better performance than the second one in terms of the round complexity. On the other hand, for  graphs with  $S = \tilde{o}(n^{1/3})$,  the second algorithm outperforms the first one in terms of the round complexity. In fact when $S = O(\log\log n)$ then the second algorithm admits a round complexity of $O(\log\log n)$ and message complexity of $\tilde{O}(n^2)$. To the best of our knowledge, this is the first work to study the Steiner tree problem in the $\mathcal{CONGESTED}$ \ $\mathcal{CLIQUE}$ model.
\end{abstract}



\begin{keyword}
Steiner tree, shortest path forest, $\mathcal{CONGESTED}$ \ $\mathcal{CLIQIUE}$, distributed approximation algorithm



\end{keyword}

\end{frontmatter}


\section{Introduction}
The $\mathcal{CONGESTED}$ \ $\mathcal{CLIQUE}$ model (CCM) is one of the fundamental models in distributed computing that was first introduced by Lotker et al. \cite{Lotker:2005:MST:1085579.1085591}. In this model nodes can communicate with each other via an underlying communication network, which is a \emph{clique}. Communication happens in synchronous rounds and a pair of nodes can exchange $b$ bits in a round.  In this paper we assume that $b = O(\log n)$, where $n$ is the number of nodes in the communication network. In literature there are two other classic models of distributed computing, namely $\mathcal{LOCAL}$ and $\mathcal{CONGEST}$. The $\mathcal{LOCAL}$ model of distributed computing mainly focuses on \emph{locality}\footnote{In distributed computing the \emph{locality} means processors are restricted to collecting data from others which are at a distance of $x$ hops in $x$ time units. The issue of locality is, to what extent a global solution to a computational problem can be obtained from locally available data \cite{Linial:1992:LDG:130563.130578}.} and ignores the congestion by allowing messages of unlimited sizes to be communicated \cite{Peleg2000ANL,Linial:1992:LDG:130563.130578}. The $\mathcal{CONGEST}$ model on the other hand simultaneously considers the congestion (by bounding the transmitted message size) and locality. In contrast, the CCM takes locality out of the picture and solely focuses on congestion. Since in CCM the hop diameter is \emph{one} therefore nodes can directly communicate with each other and in each round they can together exchange $O(n^2 \log n)$ bits. Note that in all the above three models of distributed computing, nodes (processors) are considered computationally unbounded.

The $n$ nodes and a subset of the edges of the communication network form the \emph{input graph} in the CCM. In the input graph each node has a unique identity (ID) and it knows the weights of all the edges incident on it. In this paper we study a classical combinatorial optimization problem, the \emph{Steiner tree} problem, in the CCM. It is defined as follows.

\begin{definition}[{Steiner tree (ST) problem}]
Given a connected undirected graph $G=(V,E)$ and a weight function $w : E \rightarrow \mathbb{R}^{+}$, and a set of vertices $Z \subseteq V$, known as the set of terminals, the goal of the ST problem is to find a tree $T'=(V',E')$ such that $\sum_{e \in E'} w(e)$ is minimized subject to  $Z \subseteq V' \subseteq V$ and $E' \subseteq E$.
\end{definition}

The set $V \setminus Z$ is known as the set of \emph{non-terminals} or \emph{Steiner} nodes. Note that if $|Z| = 2$ then the ST problem reduces to the problem of finding {\em shortest-path} between two distinct nodes in the network. On the other hand if $|Z| = |V|$ then the ST problem becomes the minimum spanning tree (MST) problem. Specifically the ST problem is a generalized version of the MST problem. It is known that both MST and the shortest path problem can be solved in polynomial time
whereas the ST problem is one of the original 21 problems proved NP-complete by Karp \cite{DBLP:conf/coco/Karp72} (in the centralized setting). The best known (polynomial time) approximation ratio  for solving the ST problem in the centralized setting is $\ln (4) + \epsilon \approx 1.386 + \epsilon$, for $\epsilon > 0$ due to Byrka et al. \cite{Byrka:2010:ILA:1806689.1806769}. It is also known that  the ST problem  can not be solved in polynomial time with an approximation factor  $\leq \frac{96}{95}$  unless $P=NP$
\cite{chlebik:2008:STP:1414105.1414423}. 

The ST problem finds applications in numerous areas such as the VLSI layout design, communication networks, transportation networks, content distribution (video on demand, streaming multicast) networks, phylogenetic tree reconstruction in computational biology etc. Moreover the ST problem appears as a subproblem or as a special case of many other  problems in network design such as Steiner forest, prize-collecting Steiner tree etc. There are many variations of the ST problem such as directed Steiner tree, metric Steiner tree, euclidean Steiner tree, rectilinear Steiner tree, and so on. Hauptmann and Karpinaski \cite{Hauptman-Karpinaski} provide a website with continuously updated state of the art results for many variants of the problem.

\vspace{2.5em}
\noindent
{\bf Motivation.} The  motivation behind the study of the CCM is to understand the role of congestion in distributed computing. There has been a lot of progress in solving various problems in the CCM including minimum spanning tree (MST) \cite{Lotker:2005:MST:1085579.1085591,10.1007/978-3-662-45174-8_35,Ghaffari:2016:MLR:2933057.2933103,Jurdzinski:2018:MOR:3174304.3175472}, facility location \cite{Berns:2012:SDA:2359888.2359973,Gehweiler2014}, shortest paths and distances \cite{Censor-Hillel:2015:AMC:2767386.2767414,holzer_et_al:LIPIcs:2016:6597,Nanongkai:2014:DAA:2591796}, subgraph detection \cite{Drucker:2012:CCD:2332432.2332443}, triangle finding \cite{Drucker:2012:CCD:2332432.2332443, Dolev:2012:TTA:2427873.2427892}, sorting \cite{Patt-Shamir:2011:RCD:1993806.1993851,Lenzen:2013:ODR:2484239.2501983}, routing \cite{Lenzen:2013:ODR:2484239.2501983}, and ruling sets \cite{10.1007/978-3-662-45174-8_35,Berns:2012:SDA:2359888.2359973}. Recently Pai and Pemmaraju \cite{DBLP:journals/corr/abs-1905-09016} studied the \emph{graph connectivity} lower bounds in the CCM. Specifically they showed that the lower bound round complexity for graph connectivity problem in the BCCM($1$)\footnote{In literature the CCM is classified into two types namely $\mathcal{BROADCAST}$ $\mathcal{CONGESTED}$ \ $\mathcal{CLIQUE}$ model (BCCM) and $\mathcal{UNICAST}$ \ $\mathcal{CONGESTED}$ \ $\mathcal{CLIQUE}$ model (UCCM) \cite{Drucker:2014:PCC:2611462.2611493}. In BCCM($b$) each node can only broadcast a single $b$-bit message over each of its incident links in each round. On the other hand the $b$-bits UCCM (UCCM($b$)) allows each node to send a possibly different $b$-bit message over each of its incident links in the network in each round.}, which is $\Omega(\log n)$, holds for both deterministic as well as constant-error randomized Monte Carlo algorithms \cite{DBLP:journals/corr/abs-1905-09016}. Despite the fact that the ST problem has been extensively studied in the $\mathcal{CONGEST}$ model of distributed computing \cite{GENHUEY199373,PC_JF_2005,MK_FK_DM_GP_KT_2008,Lenzen:2015:FPD,Saikia-Karmakar2019}, to the best of our knowledge, such a study has not been carried out in the CCM. The best deterministic round complexity for solving the ST problem in the $\mathcal{CONGEST}$ model was recently proposed by Saikia and Karmakar \cite{Saikia-Karmakar2019}, which is $O(S + \sqrt{n} \log ^* n)$ with the approximation factor of $2(1 - 1/\ell)$, where $S$ is the \emph{shortest path diameter}\footnote{The term \emph{shortest path diameter} was first introduced by Khan and Pandurangan \cite{Khan2008}.} (the definition is deferred to Section~\ref{model-notations}) of a graph and $\ell$ is the number of terminal leaf nodes in the optimal ST, which improves the previous best round complexity of the ST problem \cite{Lenzen:2015:FPD}.  The MST problem is a special case of the ST problem and has been extensively studied in the $\mathcal{CONGEST}$ model as well as in the CCM. The best deterministic round complexity known so far for solving the MST problem in the CCM  is due to Lotker et al. \cite{Lotker:2005:MST:1085579.1085591}, which is $O(\log \log n)$. The algorithm in \cite{Lotker:2005:MST:1085579.1085591} has a message complexity of $O(n^2)$. There are other algorithms for the MST problem in the CCM that are randomized in nature with the round complexities of $O(\log \log \log n)$ \cite{Hegeman:2015:TOB:2767386.2767434} and $O(\log ^* n)$ \cite{pemmaraju_et_al:LIPIcs:2016:6882,Ghaffari:2016:MLR:2933057.2933103}. The message complexities of the algorithms in \cite{Hegeman:2015:TOB:2767386.2767434}, \cite{Ghaffari:2016:MLR:2933057.2933103}, and \cite{pemmaraju_et_al:LIPIcs:2016:6882} are $O(n^2)$, $O(n^2)$, and $o
(m)$ respectively. Here $m = |E|$. Recently Jurdzi\'{n}ski and Nowicki \cite{Jurdzinski:2018:MOR:3174304.3175472} achieved a randomized algorithm that constructs an MST in $O(1)$ rounds in the CCM with high probability. Therefore an intriguing question is:

\medskip
\noindent
{\em ``What is the best round complexity that can be achieved in solving the ST problem in the CCM while maintaining an approximation factor of at most $2$?''}

\medskip
In CCM, one can trivially compute an ST in $O(n)$ rounds by maintaining an approximation factor of at most $2$. It can be computed as follows. One can collect the entire topology of the input graph in a special node $r$, which takes $O(n)$ rounds. Then we can compute an ST by applying one of the best known centralized ST algorithms \cite{Kou:1981:FAS:2697742.2698014,Wu1986,Byrka:2010:ILA:1806689.1806769} whose approximation factor is at most $2$, and finally inform each of the nodes involved with the resultant ST. Note that the resultant ST has at most $n - 1$ edge information which can be decomposed into $O(n)$ messages. Therefore $r$ can perform the final step in $O(1)$ rounds by sending each edge of the resultant ST to a different intermediate node, which will eventually sends to the destined node.

\bigskip
\noindent
{\bf Our contribution.} In this work we propose two non-trivial deterministic distributed  approximation algorithms for the ST problem in the CCM. Both the algorithms admit an approximation factor of $2(1 - 1/\ell)$. The first one, which will be denoted by STCCM-A, computes an  ST in $\tilde{O}(n^{1/3})$ rounds and $\tilde{O}(n^{7/3})$ messages. We also propose a deterministic distributed \emph{shortest path forest} (SPF) (the definition is deferred to Section~\ref{description-spf}) algorithm in the CCM (henceforth it will be denoted by SPF-A) that computes a SPF in $O(n^{1/3} \log n)$ rounds and $O(n^{7/3} \log n)$ messages which will be used as a subroutine in the proposed STCCM-A algorithm. The SPF-A algorithm is based on an \emph{all pairs shortest path} (APSP) algorithm in the CCM due to Censor-Hillel et al. \cite{Censor-Hillel:2015:AMC:2767386.2767414}. The first contribution of this paper is summarized in the following theorem.

\begin{theorem}
Given a connected undirected weighted graph $G = (V, E, w)$ and a terminal set $Z \subseteq V$, there exists an algorithm that computes an ST in $\tilde{O}(n^{1/3})$ rounds and $\tilde{O}(n^{7/3})$ messages  in the CCM with an approximation factor of $2(1 - 1/\ell)$, where $\ell$ is the number of terminal leaf nodes in the optimal ST. 
\end{theorem}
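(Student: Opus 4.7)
The plan is to implement, in the CCM, a variant of the classical Kou--Markowsky--Berman style 2-approximation for Steiner tree, using the shortest path forest primitive SPF-A as the main workhorse. The key observation, standard in the centralized setting, is that if one builds the metric closure on the terminal set $Z$, computes an MST on it, expands each MST edge back into its underlying shortest path, extracts a spanning tree of the resulting subgraph, and finally trims non-terminal leaves, then the output is a Steiner tree whose weight is at most $2(1 - 1/\ell)$ times the optimum. My task is therefore to show that each of these stages can be carried out in the CCM within $\tilde{O}(n^{1/3})$ rounds and $\tilde{O}(n^{7/3})$ messages.

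First, I would invoke SPF-A to compute a shortest path forest rooted at the terminals; by the stated complexity of SPF-A this takes $O(n^{1/3}\log n)$ rounds and $O(n^{7/3}\log n)$ messages. After this phase every node $v \in V$ knows its nearest terminal $t(v)$, the distance $d(v, t(v))$, and its parent in the SPF. Using $O(1)$ additional rounds of communication over the clique, every node can learn $t(u)$ and $d(u,t(u))$ for each of its neighbors $u$, since each node only broadcasts $O(1)$ words per neighbor and the CCM supports arbitrary pairwise communication. This enables us to view every original edge $(u,v) \in E$ with $t(u) \neq t(v)$ as a ``bridge'' that induces a virtual edge between terminals $t(u)$ and $t(v)$ of weight $d(u,t(u)) + w(u,v) + d(v,t(v))$. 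The resulting virtual graph $G_Z$ on the terminal set is exactly (a subgraph of) what is sometimes called the distance network, and an MST of $G_Z$ gives the backbone of the desired 2-approximation.

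Second, I would compute an MST of $G_Z$. Since $G_Z$ has $|Z| \le n$ vertices and its edge information is distributed across the clique, I can simulate $G_Z$ on the same $n$-node clique by assigning one clique node to each terminal (and having non-terminals act as helpers). Applying the deterministic MST algorithm of Lotker et al.\ in the CCM yields an MST $T_Z$ of $G_Z$ in $O(\log\log n)$ rounds and $O(n^2)$ messages, which is subsumed by the bounds we are targeting. Once $T_Z$ is known, each virtual edge in $T_Z$ can be expanded into its underlying concrete path in $G$ (the two SPF-paths from $t(u)$ and $t(v)$ to their common bridge endpoints) by routing along the SPF parent pointers; broadcasting the $O(|Z|)$ selected bridge edges can be handled via Lenzen's routing in $O(1)$ rounds and $\tilde{O}(n^2)$ messages. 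Let $G_s$ be the union of the expanded paths. Finally, I would compute an MST $T_s$ of $G_s$ (again via the Lotker et al.\ algorithm) and, using one additional $O(\log n)$-round pruning phase that iteratively removes non-terminal leaves, output the final Steiner tree $T$.

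For correctness and the approximation ratio, I would reuse the standard analysis: traversing an Eulerian tour of a doubled optimal Steiner tree and short-cutting at non-leaf terminals shows that $G_Z$ admits a Hamiltonian-like walk over $Z$ of weight at most $2(1-1/\ell)\cdot w(T^{\*})$, hence $w(T_Z) \le w(T) \le 2(1-1/\ell)\cdot w(T^{\*})$, and the expansion and pruning steps do not increase weight. The round and message complexity is then dominated by SPF-A, giving $\tilde{O}(n^{1/3})$ rounds and $\tilde{O}(n^{7/3})$ messages overall. The main obstacle I anticipate is not the high-level algorithm, which is classical, but the careful distributed bookkeeping required to (i) expand virtual MST edges into their underlying shortest paths without any single node becoming a communication bottleneck, and (ii) realize the MST of $G_s$ when $G_s$ may include Steiner nodes lying on many expanded paths; both issues can be addressed by invoking Lenzen's routing to balance load across the clique, but they are where the detailed message-complexity accounting must be done carefully to stay within $\tilde{O}(n^{7/3})$.
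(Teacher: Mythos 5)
Your proposal follows the same overall strategy as the paper's STCCM-A algorithm: compute a shortest path forest via SPF-A (i.e., via the Censor-Hillel et al.\ APSP routine), use the ``bridge'' edges $(u,v)$ with $s(u)\neq s(v)$ weighted by $d(u,s(u))+w(u,v)+d(v,s(v))$ to capture terminal-to-terminal distances, run the deterministic Lotker et al.\ MST algorithm, prune non-terminal leaves, and inherit the $2(1-1/\ell)$ bound from the Kou--Markowsky--Berman analysis. The complexity accounting (everything dominated by the $\tilde{O}(n^{1/3})$-round, $\tilde{O}(n^{7/3})$-message APSP step) matches the paper.

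The one place you genuinely diverge is the MST step, and it is exactly where your argument has a gap. You explicitly contract each Voronoi region to its terminal and propose to run the MST algorithm on a $t$-vertex virtual distance network $G_Z$, then expand the chosen virtual edges back into paths and take a second MST of the union. The paper instead keeps all $n$ vertices and reweights: SPF tree edges get weight $0$, intra-tree non-forest edges get weight $\infty$, and inter-tree edges get the bridge weight. Running Lotker et al.\ on this graph $G_c$ performs the contraction \emph{implicitly} --- the zero-weight forest edges force each Voronoi tree into a single MST fragment for free, and the MST algorithm's own MWOE selection finds the minimum-weight bridge between any two fragments. Your explicit-contraction route requires you to (i) aggregate, for each of up to $\binom{t}{2}$ terminal pairs, the minimum-weight bridge whose candidate witnesses are scattered over as many as $\Theta(n^2)$ edges held by non-terminal nodes, and (ii) simulate a $t$-vertex input on the $n$-node clique so that the terminal responsible for $z_i$ actually knows its incident virtual edges. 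Neither step is automatic: naive aggregation can concentrate $\Theta(n^2)$ candidates at one node, and while a two-stage reduction (each node first locally minimizes over its own incident bridges per destination terminal, then Lenzen routing delivers the $O(nt)$ survivors to per-pair aggregators) can be made to work in $O(1)$ rounds and $O(n^2)$ messages, you defer precisely this bookkeeping as an ``anticipated obstacle'' rather than resolving it. You also implicitly need Mehlhorn's observation that the bridge-induced distance network (whose edge weights may exceed the true terminal distances for some pairs) still contains an MST of the full metric closure $K_Z$; you assert the $2(1-1/\ell)$ bound for $G_Z$ directly, but the Euler-tour argument bounds the MST of $K_Z$, so this bridging lemma (the analogue of the paper's Lemma on shortest straight paths) must be stated and used. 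In short: the approach is sound and would yield the theorem, but the reweighting device in the paper's Step~2 is not a cosmetic difference --- it is what lets the paper avoid the two distributed-implementation claims that your write-up leaves unproven.
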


The proposed STCCM-A algorithm is inspired by an algorithm proposed in \cite{Saikia-Karmakar2019}. It consists of four steps (each step is a small distributed algorithm)\--- the first step is to build a SPF $G_F$ of the input graph $G$ for a given terminal set $Z$, which is essentially a partition of the graph $G$ into disjoint trees: Each partition contains exactly one terminal and a subset of non-terminals. A non-terminal $v$ joins a partition containing the terminal $z\in Z$ if and only if $\forall x \in Z \setminus \{z\}$, $d(z, v) \leq d(x, v)$.\footnote{$d(u, v)$ denotes the (weighted) shortest distance between nodes $u$ and $v$ in graph $G$.}  In the second step the weights of the edges of $G$ with respect to the SPF $G_F$ are suitably changed, which produces a modified graph $G_c$; in the third step the MST algorithm proposed by Lotker et al. \cite{Lotker:2005:MST:1085579.1085591} is applied on the graph $G_c$ to build an MST $T_M$; and finally some edges are pruned from $T_M$ in such a way that in the remaining tree $T_Z$ (which is the required ST) all leaves are terminal nodes.

The second proposed algorithm for the ST problem in the CCM, which will be denoted by STCCM-B, computes a $2(1 - 1/\ell)$-approximate ST in $O(S + \log \log n)$ rounds and $O(S(n - t)^2   + n^2)$ messages. Similar to the STCCM-A algorithm, the  STCCM-B algorithm also consists of four steps. Except the step $1$, all other steps in STCCM-B algorithm are same as that of the STCCM-A algorithm. For step $1$ in the STCCM-B algorithm, which is the SPF construction, we propose another SPF algorithm in the CCM (henceforth it will be denoted by SPF-B) that computes a SPF in $O(S)$ rounds and $O(S (n - t)^2 +  nt)$ messages. The second contribution of this paper is summarized in the following theorem.

\begin{theorem}
Given a connected undirected weighted graph $G = (V, E, w)$ and a terminal set $Z \subseteq V$, there exists an algorithm that computes an ST in $O(S + \log \log n)$ rounds and $O(S (n - t)^2 + n^2)$ messages  in the CCM with an approximation factor of $2(1 - 1/\ell)$, where $n > t$, $S$ is the shortest path diameter of \ $G$, and \ $\ell$ is the number of terminal leaf nodes in the optimal ST. 
\end{theorem}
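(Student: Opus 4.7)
The plan is to realize the STCCM-B algorithm as the four-step pipeline already outlined in the introduction and to argue that it inherits its correctness and approximation guarantee from the STCCM-A algorithm (since they differ only in how the shortest path forest is constructed), while its complexity bounds come from analyzing each of the four steps separately in the CCM.

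First I would formalize the four subroutines. Step~1 constructs the shortest path forest $G_F$ of $G$ with respect to $Z$ using the proposed SPF-B algorithm; here I would need to show that SPF-B correctly computes, for every non-terminal $v$, the closest terminal under the tie-breaking rule $d(z,v)\le d(x,v)$ for all $x\in Z\setminus\{z\}$, and that it terminates in $O(S)$ rounds with $O(S(n-t)^2 + nt)$ messages. Step~2 is a purely local reweighting: every edge $(u,v)\in E$ has its weight reset using $d(u,\cdot)$ and $d(v,\cdot)$ information provided by $G_F$, producing $G_c$; in the CCM this requires only constant rounds because each endpoint already knows the relevant partition labels and distances from Step~1. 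Step~3 invokes the deterministic MST algorithm of Lotker et al.~\cite{Lotker:2005:MST:1085579.1085591} on $G_c$, which runs in $O(\log\log n)$ rounds and $O(n^2)$ messages. Step~4 prunes non-terminal leaves from $T_M$, which can be done in $O(1)$ rounds in the CCM by local edge information exchange, yielding the tree $T_Z$.

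Next I would establish the approximation guarantee. Because Steps~2--4 are identical to those in STCCM-A, the proof that $w(T_Z)\le 2(1-1/\ell)\cdot w(T^{*})$ (where $T^{*}$ is the optimal Steiner tree) carries over verbatim; it rests on the classical KMB-style argument that an MST of the terminal completion graph, when reconstructed through shortest paths, yields a Steiner tree within factor $2(1-1/\ell)$ of optimum. The only thing to verify is that the reweighted graph $G_c$ together with $G_F$ encodes exactly the same combinatorial object as in STCCM-A---this is immediate since SPF-A and SPF-B produce the \emph{same} shortest path forest under the same tie-breaking rule; they only differ in how that forest is computed.

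For the complexity, I would sum the four steps: $O(S)+O(1)+O(\log\log n)+O(1)=O(S+\log\log n)$ rounds, and $O(S(n-t)^2+nt)+O(n^2)+O(n^2)+O(n^2)=O(S(n-t)^2+n^2)$ messages, using $t\le n$ to absorb the $nt$ term into $n^2$. The main obstacle, as I see it, is Step~1: proving that SPF-B simultaneously achieves the $O(S)$ round bound and the $O(S(n-t)^2+nt)$ message bound in the CCM. A natural approach is to have each terminal flood a distance vector of length at most $S$ along shortest paths, using the clique edges to deliver updates in a single hop while scheduling transmissions so that no non-terminal receives more than $O(n-t)$ messages per round. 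The analysis must carefully bound per-round congestion, since each of the $n-t$ non-terminals may need to receive updates from up to $t$ terminals over $S$ rounds, and the $(n-t)^2$ factor suggests that the intended argument charges each update not directly to a terminal-non-terminal pair but to a non-terminal--non-terminal interaction during relaxation. Once that bound is established, the rest of the theorem follows by simple composition with the already-analyzed Steps~2--4.
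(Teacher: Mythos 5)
Your proposal follows essentially the same route as the paper: STCCM-B is STCCM-A with Step~1 replaced by SPF-B, the approximation factor is inherited because the output is still a generalized MST for $Z$, and the round and message bounds are obtained by summing the four steps exactly as you do, with your reading of the $O(S(n-t)^2+nt)$ bound matching the paper's SPF-B (terminals broadcast one $\langle update \rangle$ message to all $n$ nodes and then fall silent, and only non-terminal-to-non-terminal relaxations occur during the remaining $O(S)$ Bellman--Ford rounds, so no per-round transmission scheduling is needed in the CCM since each update fits in one $O(\log n)$-bit message per link). The only nitpick is that you justify reusing Steps~2--4 by asserting that SPF-A and SPF-B produce the \emph{same} forest, which need not hold under distance ties (the two algorithms break ties differently); what the argument actually needs, and what the paper uses, is merely that SPF-B outputs \emph{a} valid shortest path forest in the sense of Definition~\ref{defi-spf}.
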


As a by-product of the above theorem, for constant or sufficiently small shortest path diameter networks (where $S = O(\log \log n)$) the following corollary holds.
\begin{corollary} \label{corolloary-1}
If \ $S = O(\log \log n)$ then a $2(1 - \frac{1}{\ell})$-approximate ST can be deterministically computed in $O(\log \log n)$ rounds and $\tilde{O}(n^2)$ messages in the CCM.
\end{corollary}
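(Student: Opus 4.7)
The plan is a direct substitution into Theorem 2, which guarantees the STCCM-B algorithm produces a $2(1-1/\ell)$-approximate Steiner tree in $O(S + \log\log n)$ rounds and $O(S(n-t)^2 + n^2)$ messages on any connected undirected weighted graph with shortest path diameter $S$ and $t$ terminals. Since the approximation factor promised by Theorem 2 does not depend on $S$ at all, only the round and message counts need to be re-expressed under the hypothesis $S = O(\log\log n)$.

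First I would substitute $S = O(\log\log n)$ into the round bound: $O(S + \log\log n) = O(\log\log n + \log\log n) = O(\log\log n)$, which gives the claimed round complexity. Next I would substitute the same bound into the message count: $O\bigl(S(n-t)^2 + n^2\bigr) = O\bigl((\log\log n)(n-t)^2 + n^2\bigr)$. Using the trivial bound $(n-t)^2 \leq n^2$ and absorbing the $\log\log n$ factor into the $\tilde{O}(\cdot)$ notation, this simplifies to $\tilde{O}(n^2)$, which matches the stated message complexity. The approximation factor $2(1 - 1/\ell)$ is inherited verbatim from Theorem 2, as is the fact that the algorithm is deterministic.

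There is essentially no obstacle here, since Corollary 1 is a quantitative specialization of Theorem 2 under a specific regime of the parameter $S$; the corollary introduces no new algorithmic idea and requires no new analysis beyond the substitution described above. The only minor subtlety worth noting is that $t \leq n$ (because $Z \subseteq V$), so the term $(n-t)^2$ is indeed dominated by $n^2$ uniformly in $t$, which justifies collapsing both summands of the message bound into a single $\tilde{O}(n^2)$. Consequently, once Theorem 2 has been established, Corollary 1 follows in one line.
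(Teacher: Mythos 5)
Your proposal is correct and follows exactly the route the paper intends: the corollary is stated as an immediate by-product of Theorem 2 (via Theorems \ref{m-stccm-round-complexity} and \ref{m-stccm-message-complexity}), obtained by substituting $S = O(\log\log n)$ and bounding $(n-t)^2 \leq n^2$ so that the message count collapses to $\tilde{O}(n^2)$. The paper offers no additional argument beyond this substitution, so there is nothing missing from your one-line derivation.
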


\noindent
The above round and message complexities of the STCCM-B algorithm almost coincide with the best known deterministic result for MST construction in the CCM due to Lotker et al. \cite{Lotker:2005:MST:1085579.1085591} and the approximation factor of the resultant ST is at most $2(1 - 1/\ell)$ of the optimal.

\vspace{2em}
\noindent
{\bf Related work.} Chen et al. \cite{GENHUEY199373} proposed the first deterministic distributed algorithm for the ST problem in the  $\mathcal{CONGEST}$ model and achieved an approximation factor of $2(1 - 1/\ell)$. The time and message complexities of this algorithm  are $O(n(n - t))$ and $O(m + n(n - t + \log n))$ respectively. Chalermsook et al. \cite{PC_JF_2005} presented a deterministic distributed $2$-approximate algorithm for the ST problem in the synchronous model (which allows a bounded message size only) with time and message complexities of $O(n \log n)$ and $O(tn^2)$ respectively. Khan et al. \cite{MK_FK_DM_GP_KT_2008} presented an $O(\log n)$-approximate randomized distributed algorithm for the ST problem in the $\mathcal{CONGEST}$ model with time complexity $O(S \log^2 n)$ and message complexity $O(Sn \log n)$. Lenzen and Patt-Shamir \cite{Lenzen:2015:FPD} presented two distributed algorithms for the Steiner forest problem (a more generalized version of the ST problem) in the  $\mathcal{CONGEST}$ model: one is deterministic and the other one is randomized. The former one finds, a $(2 + o(1))$-approximate Steiner forest in $\tilde{O}(\sqrt{\min\{D, k\}} (D + k) + S + \sqrt{\min\{St, n\}})$ rounds, where $D$ is the unweighted diameter and $k$ is the number of terminal components in the given input graph. The latter one finds a $(2 + o(1))$-approximate Steiner forest in $\tilde{O}(\sqrt{\min\{D, k\}} (D + k)  + \sqrt{n})$ rounds with high probability. Note that if $k=1$ then the Steiner forest problem reduces to the ST problem. In this case the round complexities of the two algorithms in \cite{Lenzen:2015:FPD}, in which one is deterministic and the other one is randomized reduce to $\tilde{O}(S + \sqrt{\min\{St, n\}})$ and $\tilde{O}(D  + \sqrt{n})$ respectively. Saikia and Karmakar \cite{Saikia-Karmakar2019} proposed a deterministic distributed $2(1 - 1/\ell)$-factor approximation algorithm for the ST problem in the $\mathcal{CONGEST}$ model with the round and message complexities of $O(S + \sqrt{n} \log ^* n)$ and $O(\Delta(n - t)S + n^{3/2})$ respectively, where $\Delta$ is the maximum degree of a node in the graph. Recently Bachrach et al.  \cite{DBLP:journals/corr/abs-1905-10284} showed that the lower bound round complexity in solving the ST problem exactly in the $\mathcal{CONGEST}$ model is $\Omega(n^2/\log^2 n)$. In the approximate sense on the other hand, no immediate lower bound result exists for solving the ST problem in the $\mathcal{CONGEST}$ model. Since the ST problem is a generalized version of the MST problem, we believe that in the approximate sense the lower bound results for the MST problem\footnote{Das Sarma et al. \cite{DasSarma:2011:DVH:1993636.1993686} showed that approximating (for any factor $\alpha \geq 1$) MST requires $\Omega(D + \sqrt{n/(B \log n)})$ rounds (assuming $B$ bits can be sent through each edge in each round) in the $\mathcal{CONGEST}$ model. Kutten et al. \cite{Kutten:2015:CUL:2742144.2699440} established that $\Omega(m)$ is the message lower bound  for leader election in the $KT_0$ model (i.e. {\bf K}nowledge {\bf T}ill radius {\bf 0}) which holds for both the deterministic as well as randomized (Monte Carlo) algorithms even if the network parameters $D$, $n$, and $m$ are known, and all the nodes wake up simultaneously. Since a distributed MST algorithm can be used to elect a leader, the above message lower bound in the $KT_0$ model also applies to the distributed MST construction.} also hold for the ST problem in the $\mathcal{CONGEST}$ model.  
\begin{table}[t]
\footnotesize
        \centering
        \begin{tabular}{|c|c|c|c|c|c|c|}
            \hline
             {\bf References} & {\bf Model} & {\bf Type} & {\bf Round complexity} & {\bf Message complexity} & {\bf Approximation}\\
             \hline
             & & & & & \\
            Chen et al. \cite{GENHUEY199373} & CM & DT & $O(n(n -t))$ & $O(m + n(n - t + \log n))$ & $2(1 - 1/\ell)$\\ 
            & & & & & \\
            \hline
             & & & & & \\
            Chalermsook et al. \cite{PC_JF_2005} & CM & DT & $O(n \log n)$ & $O(tn^2)$ & $2$\\ 
           & & & & & \\
            \hline
             & & & & & \\
            Khan et al. \cite{MK_FK_DM_GP_KT_2008} & CM & RM & $O(S \log^2 n)$ & $O(Sn \log  n)$ & $O(\log n)$ \\
           & & & & & \\
           
           \hline
                       
             & & & & & \\
            \multirow{4}{*}{Lenzen and Patt-Shamir \cite{Lenzen:2015:FPD}} & \multirow{4}{*}{CM} & DT & $\tilde{O}(S + \sqrt{min\{st, n\}})$ & - & $2 + o(1)$ \\ 
           & & & & & \\
            \cline{3-6}
            & & & & & \\
            & & RM & $\tilde{O}(D + \sqrt{n})$ & - & $2 + o(1)$ \\
             & & & & & \\
            \hline
             & & & & & \\
            Saikia and Karmakar \cite{Saikia-Karmakar2019} & CM & DT & $O(S + \sqrt{n} \log ^* n)$ & $O(\Delta (n - t) S + n^{3/2})$ & $2(1 - 1/\ell)$ \\
             & & & & & \\
           \hline 
            & & & & & \\
            \multirow{4}{*}{{\bf this paper}} & \multirow{4}{*}{CCM} & \multirow{4}{*}{DT} & $\tilde{O}(n^{1/3})$ & $\tilde{O}(n^{7/3})$ & $2(1 - 1/\ell)$ \\
           & & & & & \\
            \cline{4-6}
            & & & & & \\
            & & & $O(S + \log\log n)$ & $O(S(n - t) ^2 + n^2)$ & $2(1 - 1/\ell)$ \\
             & & & & & \\
            \hline
         \end{tabular}
        \vspace{2em}
        \caption{\label{summary_related_works} Summary of results for Steiner tree problem in distributed setting. Here CM = $\mathcal{CONGEST}$ model, DT = deterministic, RM = randomized, $n = |V|$, $m = |E|$, $t = |Z|$, $n > t$, $S$ and $D$ are the shortest path diameter and the unweighted diameter respectively of a connected undirected weighted graph $G$.}
\end{table}    

\bigskip
Performances of some of the distributed algorithms mentioned above, together with that of our work, are summarized in Table~\ref{summary_related_works}.

\bigskip
\noindent
{\bf Paper organization.} The rest of the paper is organized as follows. In Section~\ref{model-notations} we define the system model and notations. The description of the SPF-A algorithm is given in Section~\ref{description-spf}. The description of the STCCM-A algorithm and an illustrative example of it are given in Section~\ref{stccm-algorithm}. The proof of the STCCM-A algorithm is given in Section~\ref{proof-stccm-algorithm}. The description and proof of the STCCM-B algorithm are given in Section~\ref{stccm-b-algorithm}. A brief description of the Censor-Hillel et al.'s APSP algorithm and  Lotker et al's MST algorithm are deferred to ~\ref{Censor-Hillel-all-algo} and ~\ref{Lotker-algo} respectively. We conclude the paper in Section~\ref{conclusion}.

\vspace{1em}
\section{Model and notations} \label{model-notations}
{\bf System model}. We consider the CCM as specified in \cite{Lotker:2005:MST:1085579.1085591}. This model consists of a complete network described by a \emph{clique} of $n$ nodes. Each node represents an independent computing entity (processor). Nodes are connected through a point-to-point network of $n \choose 2$ bidirectional communication links. The bandwidth of each of the communication links is bounded by $O(\log n)$ bits. 

At the beginning of the computation, each node knows its own (unique) identity (ID) which can be represented by $O(\log n)$ bits and the part of the input assigned to it.\footnote{In this paper, we assume that initially each node knows only its own identity and weights of all the edges incident on it; nodes do not have
initial knowledge of the identifiers of their neighbors and other nodes. This is known as the $KT_0$ model, also called the \emph{clean network model} \cite{Peleg_2000}. On the other hand if each node has initial knowledge of itself and the identifiers of its neighbors then such a model is known as the $KT_1$ model (i.e. {\bf K}nowledge {\bf T}ill radius {\bf 1}). In $KT_1$ model only the knowledge of the identifiers of neighbors is assumed, not other information such as the incident edges of the neighbors.} In the CCM the \emph{input graph} $G$ and the underlying \emph{communication network}, which is a clique, are not  same. The input graph $G$ is distributed among the nodes (processors) of the clique via a vertex partition. In this paper we consider that each vertex of $G$ along with its incident edges (with their weights) are assigned to a distinct node (processor) in the clique. If an edge does not exist in $G$ then the weight of such an edge is considered equal to $\infty$ in the clique. Communication happens in synchronous rounds and a pair of nodes can exchange $O(\log n)$ bits in each round.  Nodes communicate and coordinate their actions with other nodes by passing messages (of size $O(\log n)$ bits) only.  In general, a message contains a constant number of edge weights, node IDs, and other arguments (each of them is polynomially bounded in $n$). Note here that each of the arguments in a message is polynomially bounded in $n$ and therefore polynomially many sums of arguments can be encoded with $O(\log n)$ bits. We assume that nodes and links do not fail.

An execution of the system advances in synchronous rounds. In each round: nodes receive messages that were sent to them in the previous round, perform some local computation, and then send (possibly different) messages. The time complexity is measured by the number of rounds required until all the nodes (explicitly) terminate. The message complexity is measured by  the  number of messages sent until all the nodes (explicitly) terminate. 

\medskip
{\bf Notations.} We use the following terms and notations.

\begin{itemize}
\item $\mathit{\delta(v)}$ denotes the set of edges incident on a node $v$. Similarly $\delta(C)$ denotes the set of edges having exactly one endpoint in a subgraph $C$.
\item $w(e)$ denotes the weight of an edge $e$.
\item $\mathit{ES(e)}$ denotes the state of an edge $e$. 
\item $s(v)$ denotes the source node of a node $v$.
\item $d(v)$ denotes the {\em shortest distance} from a node $v$ to its source ID $s(v)$.
\item $\pi(v)$ denotes the predecessor node of a node $v$.
\item Let $M_{n \times n}$ denotes a matrix of size $n \times n$, where $n$ is a positive integer. Then $m(i, j)$ denotes the value of an entry located at the $i^{th}$ row and $j^{th}$ column in $M_{n \times n}$.
\item $\langle X \rangle$ denotes the message $X(a_1, a_2,...)$. Here $a_1, a_2,...$ are the arguments of the message $X$. Note that unless it is necessary, arguments of $\langle X \rangle$ will not be shown in it.
\end{itemize}

\section{SPF construction in CCM} \label{description-spf}
The SPF is defined as follows.
\begin{definition}[SPF \cite{GENHUEY199373}] \label{defi-spf}
Let $G = (V, E, w)$ be a connected undirected weighted graph, where $V$ is the vertex set, $E$ is the edge set, and $w : E \rightarrow \mathbb{R}^+$ is the non-negative weight function. Given a subset $Z \subseteq V$, a SPF for $Z$ in $G$ is a sub-graph $G_F = (V, E_F, w)$ of $G$ consisting of disjoint trees $T_i = (V_i, E_i, w)$, $i = 1, 2, ..., |Z|$ such that 
\begin{itemize}
	\item for all $i$, $V_i$ contains exactly one node $z_i$ of $Z$.
	\item if $v \in V_i$ then its  $s(v)$ in $G$ is $z_i \in Z$.
	\item $V_1 \cup V_2 \cup ... V_{|Z|} = V$ and $V_i \cap V_j = \phi$ for all $i \neq j$.
	\item $E_1 \cup E_2 \cup ... E_{|Z|} = E_F \subseteq E$.
	\item The shortest path between $v$ and $s(v) = z_i$ in $T_i$ is the shortest path between $v$ and $s(v)$ in $G$.
\end{itemize}
\end{definition}

\subsection{SPF-A algorithm} \label{spf-algo-outline}
\vspace{.5em}
The SPF-A algorithm is used as a subroutine in the proposed STCCM-A algorithm. Here we give a brief description of it. It consists of two parts: the first part constructs an APSP of the input graph $G$ and the second part constructs the required SPF from the graph formed by the APSP. Specifically, in the first part we apply the algebraic method due to Censor-Hillel et al. \cite{Censor-Hillel:2015:AMC:2767386.2767414}. Censor-Hillel et al. showed that in the CCM, the iterated squaring of the weight matrix over the \emph{min-plus semiring} \cite{Fischer:1971:BMM:1446293.1446319,Munro:1971:EDT:2598952.2599354} computes an APSP in $O(n^{1/3} \log n)$ rounds and $O(n^{7/3} \log n)$ messages. One of the fundamental applications of the APSP is the construction of the routing tables in a network. Specifically a routing table entry denoted by $R[u, v] = w \in V$ is a node such that $(u, w) \in E$ and $w$ lies on a shortest path from $u$ to $v$. Censor-Hillel et al. also showed that in the CCM the iterated squaring algorithm can be used to construct routing tables of a network as well. For the sake of completeness a brief description of the iterated squaring algorithm and the procedure of the routing table construction due to Censor-Hillel et al. is provided in \ref{Censor-Hillel-all-algo}.

Now we describe the second part of the SPF-A algorithm and show that it requires $O(1)$ rounds and $O(n)$ messages. From the first part we assume that each node in $V$ knows the shortest path distances to all other nodes in $V$ and its routing table entries $R$. Therefore, by using the shortest path distance information each node $v \in V \setminus Z$ can locally choose the closest terminal as its source node $s(v)$. Note that there may be more than one terminal with equal shortest distances for a given non-terminal. In this case, the non-terminal chooses the one with the smallest ID among all such terminals. Once a non-terminal $v$ chooses the closest terminal as its source node $s(v)$,  by using its own routing table $R$, it can also choose its parent node $\pi(v)$ with respect to $s(v)$. Whenever a non-terminal $v$ sets its $\pi(v)$, it also informs $\pi(v)$ that it has chosen $\pi(v)$ as its parent.  It is obvious that to establish the parent-child relationship between a pair of nodes $(\pi(v), v)$, where $s(v) = s(\pi(v))$, it requires $O(1)$ rounds and $O(1)$ messages. In this way each node $v \in V \setminus Z$ is connected to exactly one tree rooted at some source node $s(v)$. Here each source node is a terminal. Therefore exactly $|Z|$ number of shortest path trees are constructed by the above procedure which together form the required SPF. Since the procedure of choosing parent can be started in parallel by all nodes in $V \setminus Z$, for all such pair of nodes it requires $O(1)$ rounds and $O(n)$ messages. It is clear that the overall round and message complexities of the  SPF-A algorithm are dominated by the first part of the algorithm. Therefore the following theorem holds.

\begin{theorem}
\label{spf-time-complexity}
A SPF can be deterministically computed in $\tilde{O}(n^{1/3})$ rounds and $\tilde{O}(n^{7/3})$ messages in the CCM.
\end{theorem}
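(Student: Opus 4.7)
The plan is to decompose the construction into two phases whose costs add, where the first phase dominates and inherits its bounds directly from the APSP subroutine, and the second phase contributes only lower-order overhead.

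First I would invoke the Censor-Hillel et al.\ iterated-squaring APSP algorithm as a black box. After this phase, every node $v \in V$ stores $d(v,u)$ for all $u \in V$ together with its routing-table row $R[v,\cdot]$, meaning that for every $u \neq v$, $R[v,u]$ is a neighbour of $v$ that lies on some shortest $v$-to-$u$ path. By the cited result this phase costs $O(n^{1/3}\log n)$ rounds and $O(n^{7/3}\log n)$ messages, matching the claimed $\tilde{O}(n^{1/3})$ and $\tilde{O}(n^{7/3})$ bounds.

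Next I would have each non-terminal $v$ perform a purely local computation, with no communication, based on the APSP output: scan the set $\{d(v,z) : z \in Z\}$, choose the terminal that minimises this distance, breaking ties by smallest ID, and define this terminal to be $s(v)$. Then $v$ reads off $\pi(v) := R[v,s(v)]$ from its routing-table row. Finally, $v$ sends a single message to $\pi(v)$ announcing that it has chosen $\pi(v)$ as its parent in the SPF. Since every non-terminal can do this in parallel and sends only one message, this notification phase costs $O(1)$ rounds and $O(n)$ messages. For each terminal $z \in Z$, set $s(z) := z$ and do nothing further.

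Then I would verify that the resulting sub-graph $G_F$ satisfies each clause of Definition~\ref{defi-spf}. Every non-terminal is assigned to exactly one terminal (by the minimum-with-tie-break rule), giving the required vertex partition $V_1 \cup \cdots \cup V_{|Z|} = V$ with $V_i \cap V_j = \emptyset$ for $i \neq j$. Following parent pointers from any $v \in V_i$ traces a path to $z_i$ whose length equals $d(v,z_i)$ in $G$, because $R[\cdot,z_i]$ describes a shortest-path tree rooted at $z_i$; hence the $v$-to-$s(v)$ path in $T_i$ has the same weight as a shortest $v$-to-$s(v)$ path in $G$. Acyclicity of each $T_i$ follows because parent pointers strictly decrease $d(\cdot,z_i)$. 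Summing the two phases yields the claimed complexities.

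The main obstacle is conceptual rather than analytical: one must be sure that a single invocation of the Censor-Hillel APSP routine furnishes both the distances and the routing-table entries at each node within the stated bounds, so that the second phase requires no further shortest-path computation. Once that is established, the remainder of the argument is local bookkeeping and a short verification against Definition~\ref{defi-spf}.
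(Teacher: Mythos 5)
Your proposal is correct and follows essentially the same route as the paper: invoke the Censor-Hillel et al.\ APSP-with-routing-tables algorithm as a black box ($O(n^{1/3}\log n)$ rounds, $O(n^{7/3}\log n)$ messages), then have each non-terminal locally pick its nearest terminal (ties by smallest ID) and its parent from the routing table, notifying the parent with one message in $O(1)$ rounds and $O(n)$ messages. Your added check that the result satisfies the SPF definition (in particular that $s(\pi(v))=s(v)$, which follows from the tie-breaking rule) is slightly more explicit than the paper's one-line appeal to the correctness of the APSP subroutine, but the decomposition and complexity accounting are identical.
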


The correctness of the SPF-A algorithm directly follows from the correctness of the algorithm proposed by Censor-Hillel et al. \cite{Censor-Hillel:2015:AMC:2767386.2767414}.

\section{STCCM-A algorithm} \label{stccm-algorithm}
\subsection{Preliminaries} 

\begin{definition}(Complete distance graph)
A graph $K_Z$ is called a complete distance graph on the node set $Z \subseteq V$ of a connected undirected weighted graph $G = (V, E, w)$ only if for each pair of nodes $u, v \in Z$, there is an edge $(u, v)$ in $K_Z$ and the weight of the edge $(u, v)$ is the length of the shortest path between $u$ and $v$ in $G$.
\end{definition}

The approximation factor of the proposed STCCM-A algorithm directly follows from the correctness of a centralized algorithm  due to Kou et al. \cite{Kou:1981:FAS:2697742.2698014} (Algorithm H). For a given connected undirected weighted graph $G = (V, E, w)$ and a set of terminals $Z \subseteq V$, the Algorithm H computes an ST $T_Z$ as follows.
\begin{enumerate}
	\item Construct a complete distance graph $K_Z$.
	\item Find an MST $T_A$ of $K_Z$.
	\item Construct a sub-graph $G_A$ of $T_A$ by replacing each edge of $T_A$ with its corresponding shortest distance path in $G$.
	\item Find an MST $T_B$ of $G_A$.
	\item Construct an ST $T_Z$, from $T_B$ by deleting edges of $T_B$ so that all leaves of $T_Z$ are terminals.
\end{enumerate}
The running time of the Algorithm H is $O(tn^2)$. Following the principles of both Prim's and Krushkal's algorithms, Wu at el. \cite{Wu1986} proposed a faster centralized algorithm (Algorithm M) which improves the time complexity to $O(m \log n)$, achieving the same approximation ratio as that of the Algorithm H. The Algorithm M computes a sub-graph called {\em generalized MST} $T_Z$ for $Z$ of $G$ which is defined as follows.
\begin{definition}[Generalized MST \cite{Wu1986}]
Let $G = (V, E, w)$ be a connected undirected weighted graph and $Z$ be a subset of $V$. Then a generalized MST $T_Z$ is a sub-graph of $G$ such that
\begin{itemize}
	\item there exists an MST $T_A$ of the  complete distance graph $K_Z$ such that for each edge $(u, v)$ in $T_A$, the length of the unique path in between $u$ and $v$ in $T_Z$ is equal to the weight of the edge $(u, v)$ in $T_A$.
	\item all leaves of $T_Z$ are in $Z$.  
\end{itemize} 
\end{definition}
It is clear that $T_Z$ is an ST for $Z$ in $G$ and is the actual realization of $T_A$ (Recall that $T_A$ is the MST of $K_Z$). In summary, the Algorithm M constructs a generalized MST $T_Z$ for $Z$ of $G$ as follows. Initially, the set of nodes in $Z$ are treated as a forest of $|Z|$ separate trees and successively merge them until all of them are in a single tree $T_Z$. A priority queue $Q$ is used to store the frontier vertices of paths extended from the trees. Each tree gradually extends its branches into the node set $V \setminus Z$. When two branches  belonging to two different trees meet at some node then they form a path through that node merging these two trees. The algorithm always guarantees to compute only such paths of minimum length for merging trees.

The proposed STCCM-A algorithm also computes a generalized MST for $Z$ of $G$ (which is essentially the required ST with the approximation factor of $2(1 - 1/\ell)$)  in the CCM. The round and message complexities of the STCCM-A algorithm are $\tilde{O}(n^{1/3})$ and $\tilde{O}(n^{7/3})$ respectively. Saikia and Karmakar \cite{Saikia-Karmakar2019} proposed a $2(1 - 1/\ell)$-factor deterministic distributed  approximation algorithm which also computes a generalized MST for $Z$ of $G$ in $O(S + \sqrt{n} \log ^* n)$ rounds and $O(\Delta(n - t)S + n^{3/2})$ messages. However, the algorithm in \cite{Saikia-Karmakar2019} was proposed for the $\mathcal{CONGEST}$ model, whereas the STCCM-A algorithm is proposed for the CCM. 


There are four small distributed algorithms (step $1$ to step $4$) involved in the STCCM-A algorithm similar to the algorithm proposed in \cite{Saikia-Karmakar2019}. However,  except the step $2$, all other steps in the STCCM-A algorithm are different from the algorithm proposed in \cite{Saikia-Karmakar2019}. The step $2$ of the STCCM-A algorithm is similar to that of the algorithm proposed in \cite{Saikia-Karmakar2019}.

\subsection{Outline of the STCCM-A algorithm}

\noindent
{\bf Input specification.}
We assume that there is a special node $r \in V$ available at the start of the algorithm. For correctness we assume that $r$ is the node with the smallest ID in the system. Initially each node $v \in V$ knows its unique ID, whether it is a terminal or a non-terminal, and weight $w(e)$ of each edge $e \in \mathit{\delta(v)}$. Each node in $\mathit{V}$ maintains a boolean variable $\mathit{steiner\_flag}$ whose values can be either $\mathit{true}$ or $\mathit{false}$. Initially $\mathit{steiner\_flag(v)}$ is set to $\mathit{false}$ for each $\mathit{v \in V \setminus Z}$, whereas throughout the execution of the algorithm the value of $\mathit{steiner\_flag(v)}$ is $\mathit{true}$ for each $\mathit{v \in Z}$.  Also each node $v \in V$ initially sets its local variable $\mathit{ES(e)}$ to $\mathit{basic}$ for each $\mathit{e \in \delta(v)}$. Recall that $\mathit{ES(e)}$ denotes the state of an edge $e$.


\medskip
\noindent
{\bf Output specification.} Whenever the algorithm terminates, the pair $\mathit{(steiner\_flag, B)}$ at each node $v \in V$ defines the distributed output of the algorithm. Here $\mathit{B \subseteq \delta(v)}$. $\mathit{steiner\_flag(v)=false}$ ensures that $v$ does not belong to the final ST; in this case $B = \emptyset$. On the other hand $\mathit{steiner\_flag(v)=true}$ ensures that $v$ is a part of the final ST; in this case $B \neq \emptyset$ and for each $e\in B$, $\mathit{ES(e)}$ is set to $\mathit{branch}$.

The special node $r$ initiates the algorithm.  An ordered execution of the steps is necessary for the correct working of the STCCM-A algorithm. We assume that $r$ ensures the ordered execution of the steps (step $1$ to step $4$) and initiates the step $i + 1$ after the step $i$ is terminated. The outline of the proposed STCCM-A algorithm is as follows.

\begin{enumerate}[Step 1.]
\item {\bf SPF-construction}. Construct a SPF $G_F = (V, E_F, w)$ for $Z$ in $G$ by applying the SPF-A algorithm described in Section~\ref{spf-algo-outline}. Theorem~\ref{spf-time-complexity} ensures that the round and message complexities of this step are $\tilde{O}(n^{1/3})$ and $\tilde{O}(n^{7/3})$ respectively.


\item {\bf Edge Weight modification}. With respect to the SPF $G_F$, each edge $e \in E$ of the graph $G= (V, E, w)$ is classified as any one of the following three types. 
\begin{enumerate}
	\item $\mathit{tree}$ edge: if $e \in E_F$.
	\item $\mathit{inter\_tree}$ edge: if $e \notin E_F$ and end points are incident in two different trees of the $G_F$.
	\item $\mathit{intra\_tree}$ edge: if $e \notin E_F$ and end points are incident in the same tree of the $G_F$.
\end{enumerate}

\vspace{.7em}
\noindent
Now transform $G=(V, E, w)$ into $G_c = (V, E, w_c)$. The cost of each edge $(u, v) \in E$ in $G_c$ is computed as follows.
	\begin{enumerate}
		\item $w_c(u, v) = 0$ if $(u, v)$ is a $\mathit{tree}$ edge.
		\item $w_c(u, v) = \infty$ if $(u, v)$ is an $\mathit{intra\_tree}$ edge. 
		\item $w_c(u, v) = d(u, s(u)) + w(u, v) + d(v, s(v))$ if $(u, v)$ is an $\mathit{inter\_tree}$ edge. In this case $w_c(u, v)$ realizes the weight of a path from the source node $s(u)$ to the source node $s(v)$ in $G$ that contains the $\mathit{inter\_tree}$ edge $(u, v)$.  Recall that $d(u, v)$ denotes the (weighted) shortest distance between nodes $u$ and $v$ in $G$.
	\end{enumerate}
	
\smallskip
\noindent
\hspace{1em} The classification of the edges of $G$ and the transformation to $G_c$ can be done as follows. Each node $v$ of $G$ sends a message (say $\langle set\_category(v, s(v), d(v, s(v)), \pi(v)) \rangle$) on all of its incident edges with respect to the input graph $\mathit{G}$. Let a node $v$ receives $\langle set\_category(u, s(u), d(u, s(u)), \pi(u)) \rangle$ on an incident edge $(v, u)$. If $s(v) \neq s(u)$ then $v$ sets $(v, u)$ as an $\mathit{inter\_tree}$ edge and $w_c(v, u)$ to $d(v, s(v)) + w(v, u) + d(u, s(u))$. On the other hand if $s(v) = s(u)$ then $(v, u)$ can be either a $\mathit{tree}$ edge or an $\mathit{intra\_tree}$ edge:  if $v = \pi(u)$ or $\pi(v)=u$ then $v$ sets $(v, u)$ as $\mathit{tree}$ edge and $w_c(v, u)$ to $0$. Otherwise, $v$ sets $(v, u)$ as $\mathit{intra\_tree}$ edge and $w_c(v, u)$ to $\infty$.

\smallskip
\hspace{1em} It is clear that step $2$ can be performed in $O(1)$ rounds. Also on each edge of $G$, the message $\langle set\_category \rangle$ is sent exactly twice (once from each end). Therefore, the message complexity of the step $2$ is $O(m)$.

\item {\bf MST construction}. Construct an MST $\mathit{T_M}$ of $G_c$.  In this step, we apply the deterministic MST algorithm proposed by Lotker et al. \cite{Lotker:2005:MST:1085579.1085591} (a brief description of the algorithm is deferred to ~\ref{Lotker-algo}). To the best of our knowledge, it is the only known deterministic MST algorithm proposed in the CCM till date. All other existing MST algorithms   in the CCM \cite{Hegeman:2015:TOB:2767386.2767434,pemmaraju_et_al:LIPIcs:2016:6882,Ghaffari:2016:MLR:2933057.2933103,Jurdzinski:2018:MOR:3174304.3175472} are randomized in nature. Note that the round and message complexities of the algorithm proposed by Lotker et al. are $O(\log \log n)$ and $O(n^2)$ respectively.

\hspace{.7em} We assume that each node $v \in V$ ($T_M$ contains all the nodes of $V$) knows which of the edges in $\delta(v)$ are in the $\mathit{T_M}$ and for each such edge $\mathit{e}$ it sets $\mathit{ES(e)}$ to $\mathit{branch}$. On the other hand for each $\mathit{e \in \delta(v)}$ which is not in the $\mathit{T_M}$, $v$ sets $\mathit{ES(e)}$ to $\mathit{basic}$.

\item {\bf Pruning}. Construct a generalized MST $T_Z$ by performing a pruning operation on the MST $T_M$. 
For correctness we assume that a node in $Z$ with the smallest ID is the root $r_t$ of $T_M$. The pruning operation deletes edges from $T_M$ until all leaves are terminal nodes.  
It is performed as follows. Each $v \in V$ sends its parent information (with respect to the $T_M$ rooted at $r_t$) to all other nodes. This requires $O(1)$ rounds and $O(n^2)$ messages. Now each $v \in V \setminus Z$ locally computes the $T_M$ rooted at $r_t$ from these received parent information. Then each node in $V \setminus Z$ can locally decide whether it should prune itself or not from the $T_M$. From the locally known $T_M$ each $v \in V \setminus Z$ finds whether it is an intermediate node in between two or more terminals in the $T_M$ or not. If yes, it does not prune itself from the $T_M$ and sets its $\mathit{steiner\_flag}$ to $\mathit{true}$. Otherwise, it prunes itself. Whenever a node $v$ prunes itself from the $\mathit{T_M}$ it sets its $\mathit{steiner\_flag}$ to $\mathit{false}$ and for each $\mathit{e \in \delta(v)}$ such that $\mathit{ES(e) \neq basic}$, it sets its $\mathit{ES(e)}$ to $\mathit{basic}$. On each such pruned edge $e$, $v$ asks the other end of $e$ to prune the common edge $e$. Now the edge weights of resultant ST $T_Z$ are restored to the original edge function $w$. Since each node in $\mathit{V \setminus Z}$ can start the pruning operation in parallel and in the worst-case the number of pruned edges in the network can be at most $n$, this step takes $O(1)$ rounds and $O(n)$ messages. 

The overall round and message complexities of the pruning step are $O(1)$ and $O(n^2)$ respectively.  

\end{enumerate}
The STCCM-A algorithm terminates with the step $4$.

\begin{figure}[ht]
\centering
\vspace{-14em}
\hspace*{-4em}
  \begin{subfigure}[b]{0.5\textwidth}
    \includegraphics[width=10cm, height=14cm]{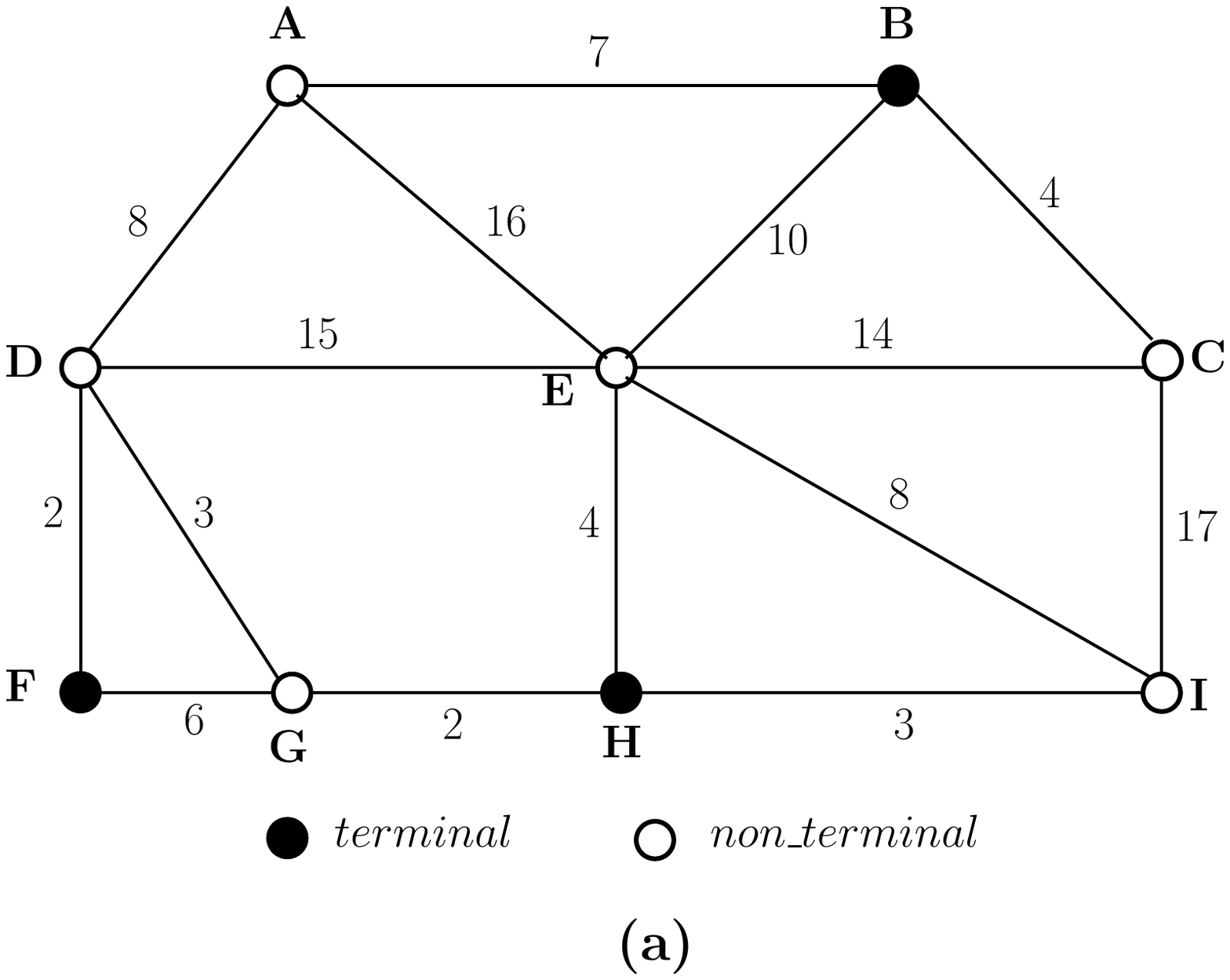}
    \vspace{-11em}
    \centering
  \end{subfigure}
  \begin{subfigure}[b]{0.5\textwidth}
    \includegraphics[width=10.5cm, height=13cm]{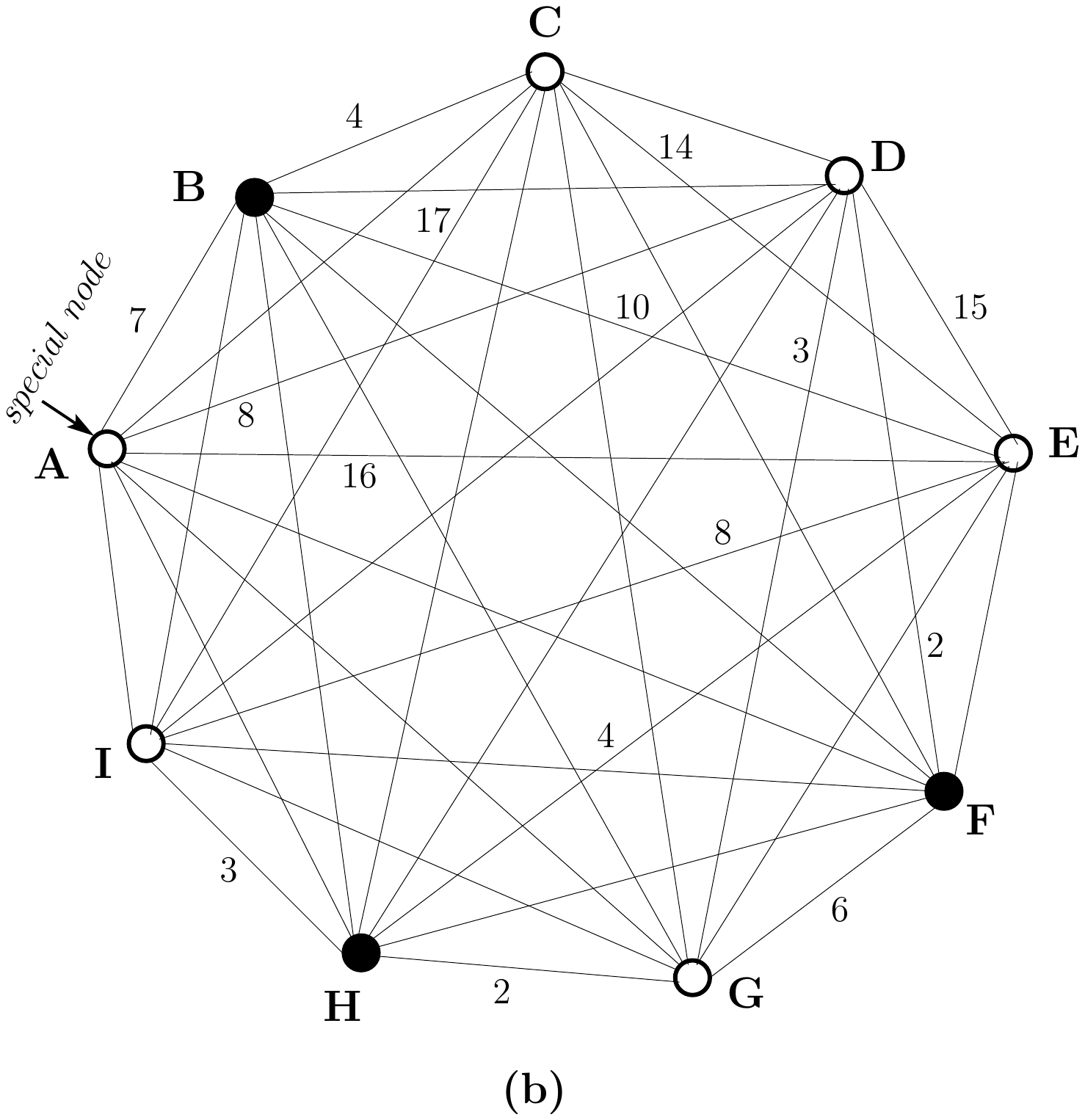}
    \vspace{-10em}
  \end{subfigure}
\end{figure}

\vspace{7em}

\begin{figure}[ht]
\centering
\vspace{-5em}
\hspace*{-4em}
  \begin{subfigure}[b]{0.5\textwidth}
    \includegraphics[width=10.5cm, height=13cm]{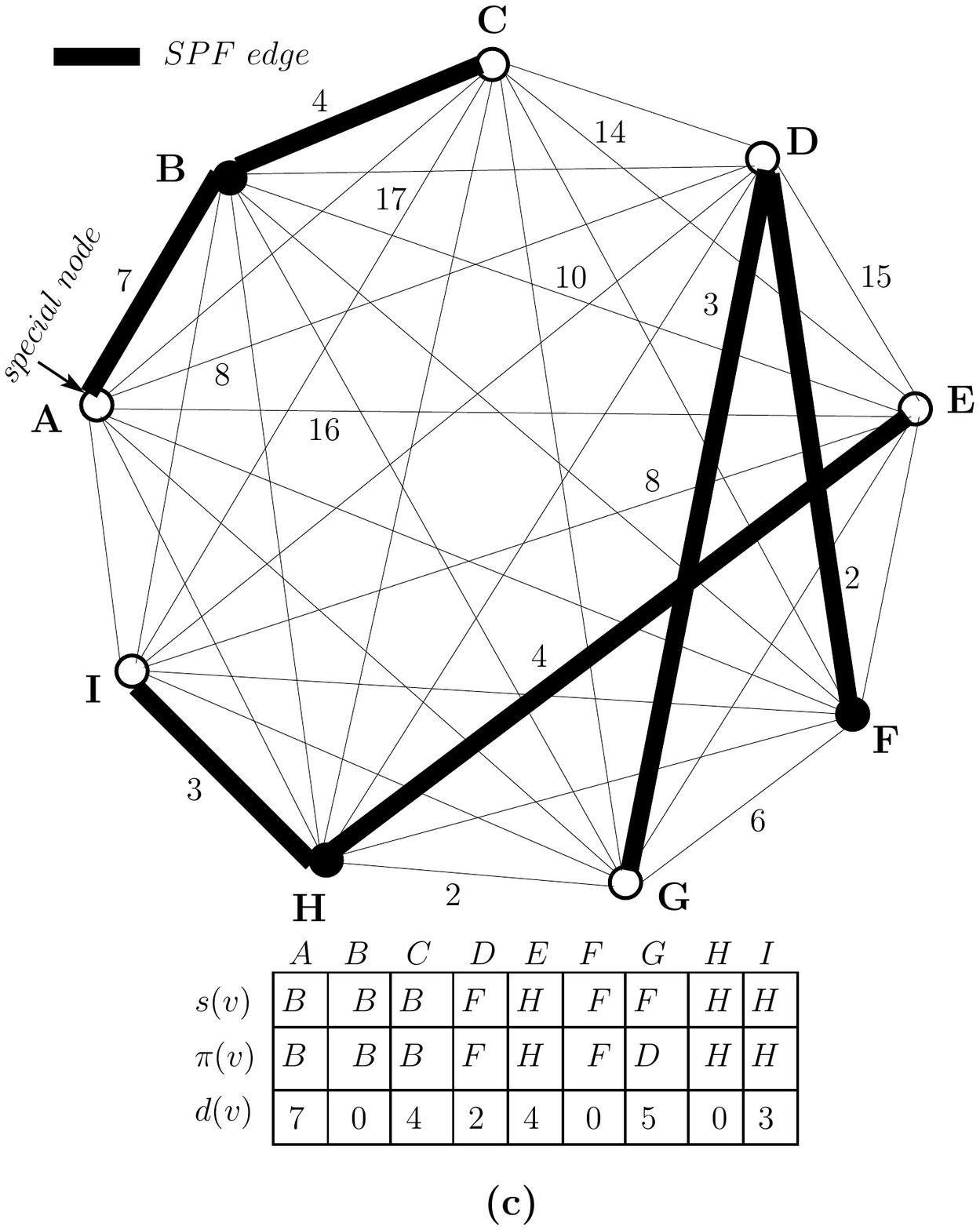}
    \vspace{-7em}
    \centering
  \end{subfigure}
  \begin{subfigure}[b]{0.5\textwidth}
  \includegraphics[width=10.5cm, height=13cm]{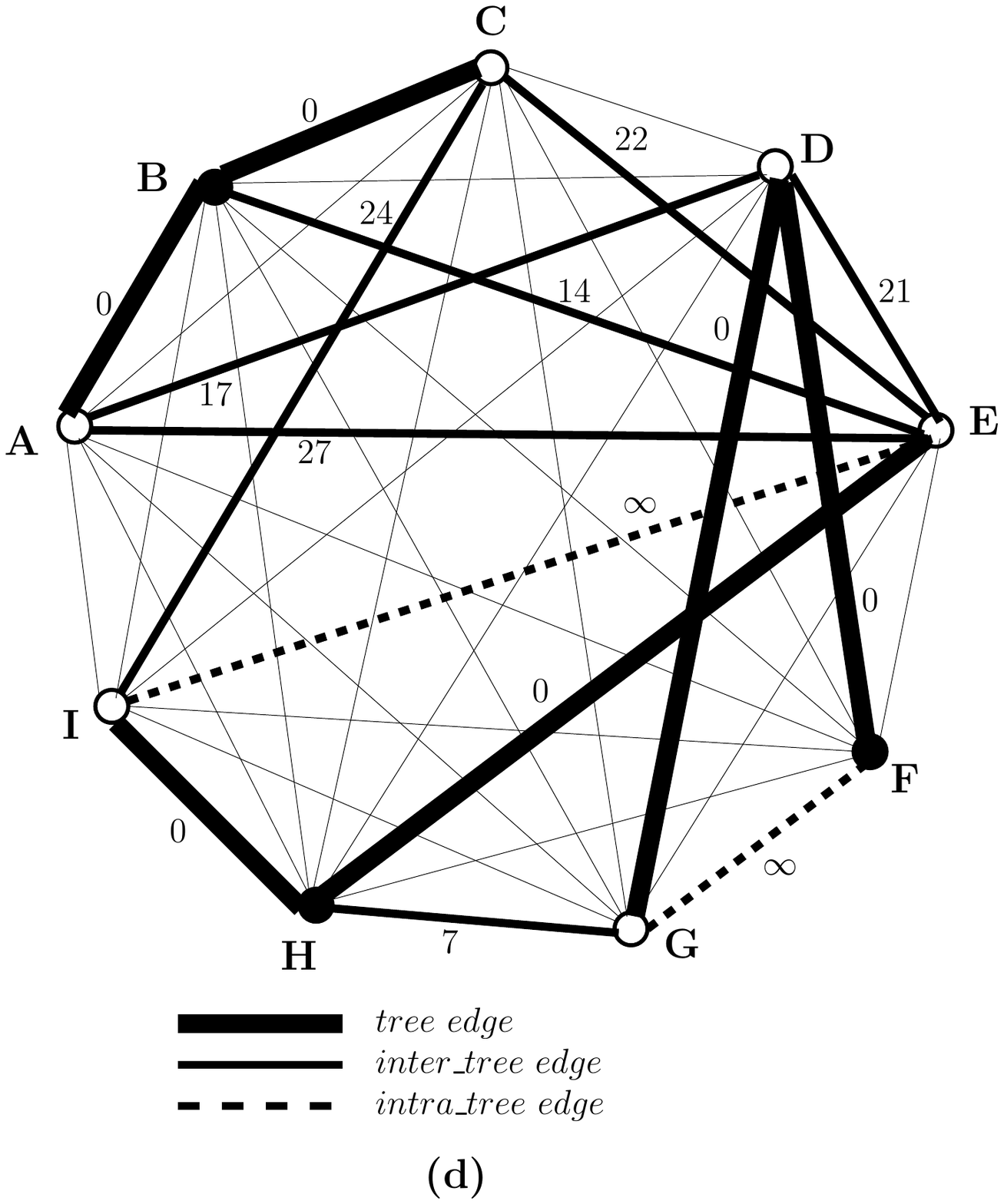}
    \vspace{-7em}
  \end{subfigure}
 \caption{(a) An arbitrary graph $G = (V, E, w)$ and a terminal set $Z = \{B, F, H\}$. (b) The input graph $G$ is distributed among the processors of a complete network $K_9$ via a vertex partition. (c) A SPF $G_F =(V, E_F, w)$ for $Z$ of $G$. The distances of nodes to their respective sources are shown in the table. (d) The modified graph $G_c = (V, E, w_c)$. (e) An MST $T_M$ of $G_c$. (f) The final Steiner tree $T_Z$ (generalized MST) for $Z$ of $G$.}  \label{fig:stccm-fig-example}
\end{figure}

\begin{figure}[ht]
\centering
\vspace{-9em}
\hspace*{-4em}
  \begin{subfigure}[b]{0.5\textwidth}
    \includegraphics[width=10.5cm, height=13cm]{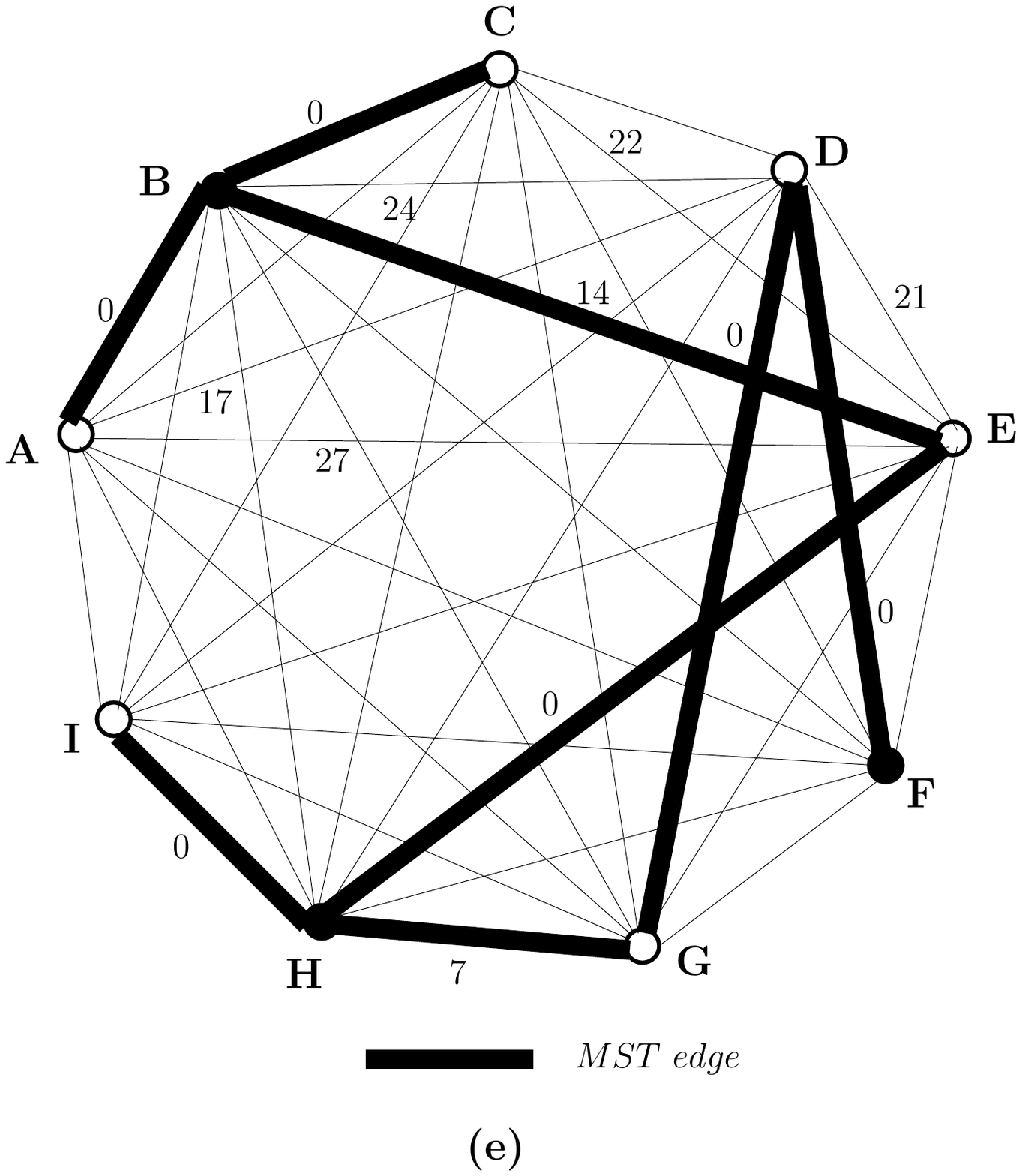}
    \vspace{-10.5em}
    \centering
  \end{subfigure}
  \begin{subfigure}[b]{0.5\textwidth}
    \includegraphics[width=10.5cm, height=13cm]{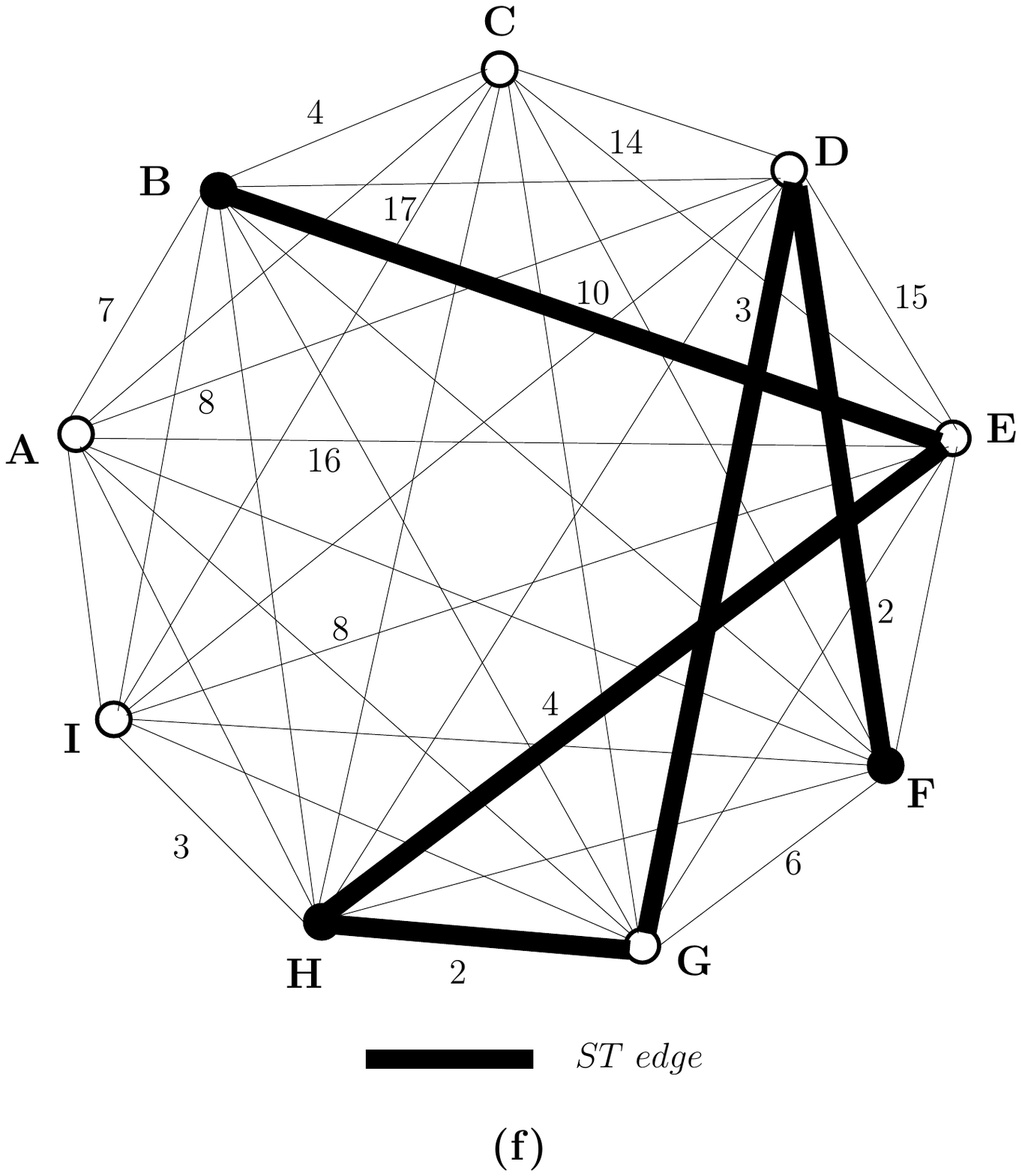}
    \vspace{-10em}
  \end{subfigure}
  \vspace{.2em}
  \caption*{Figure~\ref{fig:stccm-fig-example} (continued).} \label{fig:stccm-ccm-ex-fig-5}
\end{figure}

\subsection{An illustrating example of the STCCM-A algorithm}
Consider the application of the STCCM-A algorithm in an arbitrary graph $G = (V, E, w)$ and a terminal set $Z = \{B, F, H\}$ shown in Figure~\ref{fig:stccm-fig-example}(a). The input graph $G$ is distributed among the nodes (processors) of a complete network $K_9$ via a vertex partition which is shown in Figure~\ref{fig:stccm-fig-example}(b). Specifically each vertex and its incident edges (with weights) of $G$ are assigned to a distinct processor in the $K_9$. The weight of an edge in $K_9$ which is not in $G$ is considered equal to $\infty$ (not shown in the figure). A SPF $G_F =(V, E_F, w)$ for $Z$ is constructed which is shown in Figure~\ref{fig:stccm-fig-example}(c). In $G_F$, each non-terminal $v$ is connected to a terminal $s(v) \in Z$ whose distance is minimum to $s(v)$ than any other terminal in $G$ which is shown in the table of Figure~\ref{fig:stccm-fig-example}(c). The modified graph $G_c$ and labelling of the edge weights according to the definition of $G_c$ are shown in Figure~\ref{fig:stccm-fig-example}(d). Figure~\ref{fig:stccm-fig-example}(e) shows after the application of the Lotker et al.'s MST algorithm on $G_c$ which constructs an MST $T_M$ of $G_c$.  The final Steiner tree $T_Z$ for $Z$ of $G$, which is a generalized MST for $Z$ of $G$ is constructed from the $T_M$ by applying the pruning step of the STCCM-A algorithm, which is shown in Figure~\ref{fig:stccm-fig-example}(f).

\vspace{1em}
\section{Proof of the STCCM-A algorithm} \label{proof-stccm-algorithm}
\begin{theorem} \label{stc-time-proof}
The round complexity of the STCCM-A algorithm is $\tilde{O}(n^{1/3})$.
\end{theorem}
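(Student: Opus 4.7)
The plan is to establish the round complexity by summing the round costs of the four steps of STCCM-A, relying on the fact that the special node $r$ starts step $i+1$ only after step $i$ terminates, so the total round count is additive over the steps.

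First I would invoke Theorem~\ref{spf-time-complexity} directly for step~$1$: the SPF-A algorithm produces the SPF $G_F$ in $\tilde{O}(n^{1/3})$ rounds, and this already dominates. For step~$2$ I would observe that each node $v$ needs only a single round to broadcast $\langle set\_category(v, s(v), d(v, s(v)), \pi(v)) \rangle$ on every incident edge in $G$, and one more round for the receivers to classify the edge (as $\mathit{tree}$, $\mathit{inter\_tree}$, or $\mathit{intra\_tree}$) and compute $w_c$; since the message fits in $O(\log n)$ bits this is an honest single round (at most $O(1)$ rounds) in the CCM.

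For step~$3$ I would cite the Lotker et al.\ MST construction, which runs on the transformed graph $G_c$ in $O(\log \log n)$ rounds; the fact that some edges of $G_c$ carry weight $\infty$ does not affect the round bound since these edges are simply never chosen into the MST. For step~$4$ I would argue that the pruning requires only that every node disseminate its parent pointer in $T_M$ to all other nodes, which is one round of unicast in the CCM (each node sends one $O(\log n)$-bit message per edge of the clique); each non-terminal can then reconstruct $T_M$ locally, decide whether to prune, and notify the other endpoint of each pruned edge, all in $O(1)$ additional rounds.

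Adding the bounds gives $\tilde{O}(n^{1/3}) + O(1) + O(\log \log n) + O(1) = \tilde{O}(n^{1/3})$, which is the claim. The only subtlety I anticipate is making sure the pruning step really is $O(1)$ rounds in the CCM rather than something depending on the height of $T_M$: the key point to emphasize is that after the single round of all-to-all parent dissemination every non-terminal has a full copy of $T_M$, so the pruning decision is purely local and the final edge-removal notifications are a single round. No step contributes more than the SPF step, so $\tilde{O}(n^{1/3})$ is tight as an upper bound under the stated assumptions.
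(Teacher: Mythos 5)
Your argument is correct and matches the paper's reasoning: the paper likewise observes that step 1 (the SPF-A construction, costing $\tilde{O}(n^{1/3})$ rounds by Theorem~\ref{spf-time-complexity}) dominates the $O(1)$, $O(\log\log n)$, and $O(1)$ round costs of steps 2--4, though the paper states this only in one line while you spell out each step's contribution. No gap; your accounting of the pruning step as $O(1)$ rounds via all-to-all parent dissemination followed by local reconstruction of $T_M$ is exactly the paper's step-4 analysis.
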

\begin{proof}
It is clear that the overall round complexity of the STCCM-A algorithm is dominated by the step $1$. Therefore the round complexity of the STCCM-A algorithm is $\tilde{O}(n^{1/3})$. The polylogarthmic factors involved with this round complexity is $\log n$.
\end{proof}

\begin{theorem} \label{stc-mesage-proof}
The message complexity of the STCCM-A algorithm is $\tilde{O}(n^{7/3})$.
\end{theorem}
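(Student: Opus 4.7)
The plan is to bound the message complexity contribution of each of the four steps of STCCM-A separately, and then observe that the step~1 cost dominates.

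First, I would invoke Theorem~\ref{spf-time-complexity}, which already establishes that the SPF-A subroutine used in step~1 sends $\tilde{O}(n^{7/3})$ messages; this cost is inherited directly by step~1 of STCCM-A. Next, for step~2 I would argue that every node sends the message $\langle set\_category\rangle$ once on each of its incident edges of $G$, so each edge carries exactly two such messages, giving a total of $O(m) = O(n^2)$ messages. For step~3 I would cite the deterministic MST algorithm of Lotker et al.\ \cite{Lotker:2005:MST:1085579.1085591}, whose message complexity is $O(n^2)$. For step~4, I would recall the description in the outline: each node broadcasts its parent pointer in the rooted $T_M$ to all other nodes, contributing $O(n^2)$ messages, and then locally decides whether to prune, with only $O(n)$ additional prune-acknowledgement messages exchanged on pruned edges in the worst case. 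This gives $O(n^2)$ for step~4.

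Summing the four contributions yields $\tilde{O}(n^{7/3}) + O(n^2) + O(n^2) + O(n^2) = \tilde{O}(n^{7/3})$, because $n^{7/3}$ dominates $n^2$. Hence the overall message complexity of STCCM-A is $\tilde{O}(n^{7/3})$, with the polylogarithmic factor coming solely from the SPF-A subroutine.

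I do not anticipate any real obstacle: the theorem is essentially a bookkeeping statement, since the nontrivial message bounds (for SPF-A and for Lotker et al.'s MST) are already established elsewhere, and the remaining two steps perform only $O(1)$ rounds of local exchanges whose per-round cost is trivially $O(n^2)$ in the CCM. The only mild subtlety is making sure that in step~4 the local reconstruction of $T_M$ at every non-terminal node does not inflate the message count; this is handled by noting that the parent pointers are disseminated once in $O(n^2)$ messages and that all subsequent pruning decisions are made locally, without further communication beyond the $O(n)$ prune notifications.
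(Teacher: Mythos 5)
Your proposal is correct and follows essentially the same route as the paper: bound each of the four steps separately (step~1 via Theorem~\ref{spf-time-complexity} giving $\tilde{O}(n^{7/3})$, step~2 at $O(m)\leq O(n^2)$, steps~3 and~4 at $O(n^2)$ each) and conclude that step~1 dominates. Your accounting of step~4 is in fact slightly more detailed than the paper's, but the argument is the same.
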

\begin{proof}
By Theorem~\ref{spf-time-complexity} the message complexity of the step 1 of the STCCM-A algorithm is $\tilde{O}(n^{7/3})$. Each of the step 3 and step 4 requires $O(n^2)$ messages. The step $2$ requires $O(m)$ messages. We know that $m \leq n^2$. All of these ensures that the overall message complexity of the STCCM-A algorithm is dominated by the step $1$. Therefore the message complexity of the STCCM-A algorithm is $\tilde{O}(n^{7/3})$. The polylogarthmic factors involved with this message complexity is $\log n$.
\end{proof}


\begin{definition}[{\bf Straight path}]
Given that $G = (V, E, w)$ is a connected undirected weighted  graph and $Z \subseteq V$. Let $u, v \in Z$. Then a path $P_{uv}$ between $u$ and $v$ is called straight only if all the intermediate nodes in $P_{uv}$ (may contain no intermediate node) are in $V \setminus Z$.  
\end{definition}

\begin{lemma}
Given that $G = (V, E, w)$ is a connected undirected weighted  graph and $Z \subseteq V$. Let $u, v \in \mathit{Z}$ and there exists a straight path $P_{uv}$ between $u$ and $v$ in $\mathit{T_M}$, where $T_M$ is a resultant MST constructed  after the consecutive applications of step $1$, $2$, and $3$ of the STCCM-A algorithm on the given  graph $G$. Then $P_{uv}$ in $T_M$ is the shortest straight path between $u$ and $v$ in $G$.  
\end{lemma}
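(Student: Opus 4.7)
The plan is to exploit both the structure of the SPF $G_F$ and the MST optimality of $T_M$ in $G_c$. First, I would decompose $P_{uv}$ by its sequence of inter-tree edges $f_1,\ldots,f_l$ (in order along $P_{uv}$), visiting SPF trees $T_u=T_{j_0},T_{j_1},\ldots,T_{j_l}=T_v$. Within each intermediate $T_{j_h}$ the sub-path of $P_{uv}$ is the unique tree path joining the two boundary nodes, and, by straightness of $P_{uv}$, this tree path avoids the root $z_{j_h}$. Using the SPF property that tree paths in $G_F$ realize shortest distances in $G$ from any node to its source, I would express $w(P_{uv})_G$ as a sum of the inter-tree edge weights $w(f_h)$ plus the tree-path distances inside the visited SPF trees.

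Next, I would take an arbitrary straight path $Q$ from $u$ to $v$ in $G$ and decompose it analogously into sub-paths inside SPF trees separated by inter-tree edges $g_1,\ldots,g_k$. Using the SPF property once more, I would lower-bound the weight of each sub-path by the corresponding shortest distance in $G$; in particular, the sub-path in $T_u$ (resp.\ $T_v$) from $u$ (resp.\ to $v$) to the first (resp.\ last) interface node of $Q$ contributes at least $d(\cdot,u)$ (resp.\ $d(\cdot,v)$), matching exactly the corresponding terms inside $w_c$. This rewrites $w(Q)_G$ in terms of $\sum_h w_c(g_h)$ minus certain ``LCA savings'' incurred at each intermediate SPF tree of $Q$.

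The third and central step is to invoke the MST cut property for $T_M$ in $G_c$: removing any inter-tree edge $f_h$ from $T_M$ produces a cut, and $w_c(f_h)$ is the minimum $w_c$-weight over all inter-tree edges of $G_c$ crossing it. Since any $Q$ from $u$ to $v$ induces a walk in the SPF-contracted graph that must cross each of these cuts, a careful combinatorial accounting, combined with the inequalities from the previous paragraph and the observation that $w_c$ of a tree edge is $0$ while $w_c$ of an intra-tree edge is $\infty$, yields $w(Q)_G \ge w(P_{uv})_G$, giving the lemma.

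The hard part will be reconciling the ``LCA savings'' contributions inside the intermediate SPF trees of $P_{uv}$ (which make $w(P_{uv})_G$ strictly less than $\sum_h w_c(f_h)$ whenever $l\ge 2$) with the analogous savings on $Q$ and the cut-property lower bounds on the $w_c(g_h)$. A clean resolution seems to require an injective matching between the inter-tree edges $f_h$ of $P_{uv}$ and suitably chosen inter-tree edges of $Q$ respecting the MST cuts, together with a per-SPF-tree comparison of tree-path weights using the triangle inequality through the LCA. I expect this matching and its associated bookkeeping to constitute the bulk of the technical work.
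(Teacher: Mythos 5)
Your submission is a plan rather than a proof: you yourself defer the central step (the injective matching between the inter-tree edges of $P_{uv}$ and those of a competing path $Q$, together with the reconciliation of the ``LCA savings'' on both sides) as ``the bulk of the technical work,'' and that step is exactly where the difficulty lives. Worse, the accounting you propose runs into a concrete obstruction. To close your chain of inequalities you would need, after matching cuts to edges, something like $\sum_j w_c(g_j) \le w(Q)$ for the inter-tree edges $g_j$ of an arbitrary competing straight path $Q$. But $w_c(g_j) = d(x_j,s(x_j)) + w(g_j) + d(y_j,s(y_j))$ charges the full distances from the endpoints of $g_j$ to their SPF roots, whereas the segment of $Q$ inside an intermediate SPF tree may pass directly between two boundary nodes that are both far from that tree's root; the true weight of that segment can be far smaller than the two root distances that $w_c$ charges for it. So the per-tree lower bounds you invoke match the terms inside $w_c$ only for the two end trees containing $u$ and $v$, and the ``savings'' on the $Q$ side point in the wrong direction for the inequality you need. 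Until you exhibit the matching and show how these over-charges cancel, the argument does not go through.

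For contrast, the paper's proof is a short exchange argument, not a decomposition: it takes the (single) inter-tree edge $e=(a,b)$ of $P_{uv}$ joining the SPF trees $T_1 \ni u$ and $T_2 \ni v$ and the inter-tree edge $e'=(a',b')$ of a hypothetically shorter straight path $P'_{uv}$, notes that the SPF guarantees $P_{ua}$, $P_{ua'}$, $P_{vb}$, $P_{vb'}$ are shortest paths in $G$ (so that $w_c(e)$ and $w_c(e')$ are exactly the lengths of the two straight paths), and uses the fact that the MST construction merges $T_1$ and $T_2$ along their minimum-weight outgoing edge to conclude $w_c(e) \le w_c(e')$, i.e.\ $P_{uv} \le P'_{uv}$. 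That proof quietly restricts attention to straight paths with a single inter-tree edge; the multi-tree situation you are attacking is precisely where your sketch stalls, so you would need to resolve the obstruction above before your approach adds anything over the paper's.
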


\begin{proof}
By contradiction assume that $P_{uv}$ is not the shortest straight path in $T_M$. Let there exists a straight path $P'_{uv}$ between $u$ and $v$ such that $P'_{uv} < P_{uv}$. We show that $P'_{uv} < P_{uv}$ does not hold. Since $u$ and $v$ are two different terminals they were in different trees of the SPF, say $u \in \mathit{T_1}$ and $v \in \mathit{T_2}$, before they are included in $\mathit{T = T_1 \cup T_2}$. Let $\mathit{e = (a, b) \in \delta(T_1) \cap \delta(T_2), e' = (a', b') \in \delta(T_1) \cap \delta(T_2)}$ such that $P_{uv}$ and $P'_{uv}$ contain $e$ and $e'$ respectively as shown in Figure~\ref{fig:straight-path}. Note that the correctness of the SPF algorithm described in Section~\ref{spf-algo-outline} ensures that $P_{ua}$, $P_{ua'}$, $P_{vb}$, and $P_{vb'}$ are the shortest paths between the respective nodes in $G$ as shown in Figure~\ref{fig:straight-path}. During the execution of the step $3$ of the STCCM-A algorithm, which constructs the MST $T_M$, $a$ finds $e$ as its minimum weight outgoing edge (MWOE) and $a'$ finds $e'$ as its MWOE. Similarly $b$ and $b'$ finds $e$ and $e'$ as their MWOEs respectively. During the construction of the MST $T_M$ since $\mathit{T_1}$ and $\mathit{T_2}$ are merged through $e$, this indicates that $e$ is the MWOE of $\mathit{T_1}$. Similarly $e$ is the MWOE of $\mathit{T_2}$. Since $\mathit{T_1}$ and $\mathit{T_2}$ are merged along the path $P_{uv}$ to form $\mathit{T}$, this ensures that $P'_{uv} \geq P_{uv}$ contradicting the fact  $P'_{uv} < P_{uv}$. Therefore $P_{uv}$ is the shortest straight path between $u$ and $v$ in $G$. 

\begin{figure}[ht!]
	\vspace{-10em}
	\centering
    \includegraphics[width=12cm, height=16cm]{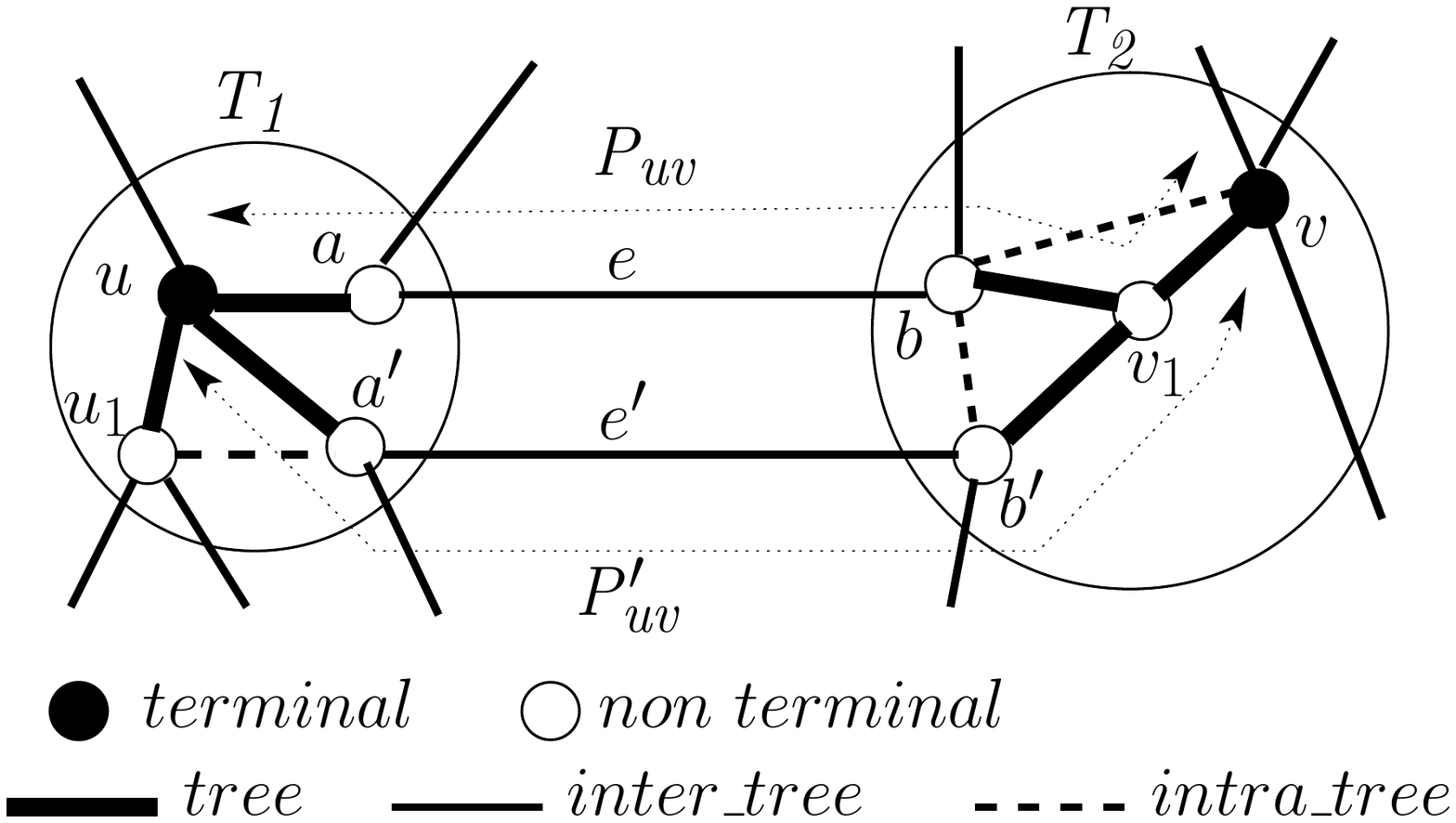}
    \vspace{-14em}
    \centering
    \caption{A state before merging of two shortest path trees $\mathit{T_1}$ and $\mathit{T_2}$ along a shortest straight path $\mathit{P_{uv}}$. The edge categories in $\mathit{T_1}$ and $\mathit{T_2}$ are shown according to the graph $\mathit{G_c}$ constructed in step $2$ of the STCCM-A algorithm. Note that both the trees $\mathit{T_1}$ and $\mathit{T_2}$ are subgraphs of a complete graph $K_n$.}\label{fig:straight-path}
\end{figure}
\end{proof}

Consider a graph $T^* = (Z, E^*, w^*)$ whose vertex set $Z$ and edge set $E^*$ are defined from  $T_M$ as follows. For each \emph{straight path} $P_{uv}$ of $T_M$, let $(a, b)$ be an edge of $E^*$. Then the following lemma holds.

\begin{lemma} \label{k-mst}
$T^*$ is an MST of $K_Z$ for $Z \subseteq V$ of graph $G = (V, E, w)$.
\end{lemma}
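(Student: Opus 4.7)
The plan is to verify two properties: that $T^*$ is a spanning tree on $Z$, and that it has minimum total weight among all spanning trees of $K_Z$.

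For the structural part, I would exploit the weight assignment of Step 2: every SPF edge has weight $0$ and every intra-tree edge has weight $\infty$ in $G_c$. Consequently, any MST $T_M$ of $G_c$ must contain all $|V|-|Z|$ edges of $G_F$ and no intra-tree edges, which forces $T_M$ to use exactly $|Z|-1$ inter-tree edges in order to merge the $|Z|$ SPF-components. I would then establish a natural correspondence between the inter-tree edges of $T_M$ and the edges of $T^*$: each inter-tree edge $(x,y)$ with $x\in T_u$ and $y\in T_v$ yields the straight path $u\leadsto x \to y\leadsto v$ in $T_M$ (internal vertices lie in $T_u\cup T_v$, whose only terminals are $u$ and $v$), and this path contributes the edge $(u,v)\in E^*$ with weight $w^*(u,v)=w_c(x,y)$. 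Because the $|Z|-1$ inter-tree edges of $T_M$ span the $|Z|$ SPF components acyclically, the induced edges $(u,v)$ form a spanning tree on $Z$.

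For the optimality part, I would invoke the cut property of MSTs. Fix $(u,v)\in E^*$ and its corresponding inter-tree edge $(x,y)\in T_M$. Removing $(u,v)$ from $T^*$ splits $Z$ into $S\ni u$ and $\bar S\ni v$, and removing $(x,y)$ from $T_M$ splits $V$ into $V_S=\bigcup_{z\in S} V(T_z)$ and $V_{\bar S}=\bigcup_{z\in\bar S} V(T_z)$. I aim to show that $w^*(u,v)\le d(a,b)$ for every $a\in S,b\in\bar S$, which is exactly the cut-property criterion for inclusion in an MST of $K_Z$. Take any shortest $a$-$b$ path $P_{ab}$ in $G$; it contains some $G$-edge $(u',v')$ crossing the cut with $u'\in V_S,v'\in V_{\bar S}$. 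Since the SPF assigns each non-terminal to its closest terminal, $d(u',s(u'))\le d(u',a)$ and $d(v',s(v'))\le d(v',b)$; adding $w(u',v')$ to both sides yields $w_c(u',v') \le d(a,u')+w(u',v')+d(v',b)=d(a,b)$. Combined with the cut-property of $T_M$ in $G_c$, which gives $w^*(u,v)=w_c(x,y)\le w_c(u',v')$, this produces $w^*(u,v)\le d(a,b)$. Specializing to $(a,b)=(u,v)$ and using the trivial $d(u,v)\le w^*(u,v)$ (since a straight path is a particular path in $G$) also yields $w^*(u,v)=d(u,v)$, so the weight of every $T^*$-edge agrees with its $K_Z$-weight. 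Hence every edge of $T^*$ satisfies the MST cut property in $K_Z$, proving that $T^*$ is itself an MST of $K_Z$.

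The main obstacle I expect is cleanly translating the cut-property inequality from the $G_c$-weights (which govern $T_M$) into $K_Z$-weights (which govern MSTs of $K_Z$), because $w_c(u',v')$ uses SPF-source distances rather than the terminal-to-terminal distances $d(a,b)$ appearing in $K_Z$. The bridge is the SPF source-selection rule $d(u',s(u'))\le d(u',z)$ for every terminal $z$; without this observation the chain of inequalities above cannot be closed, and one would only obtain an upper bound on $w^*(u,v)$ in terms of straight-path distances rather than in terms of genuine shortest-path distances $d(a,b)$.
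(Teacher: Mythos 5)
Your argument is correct, but it is necessarily a different route from the paper's, because the paper gives no proof of this lemma at all: it simply asserts that the statement ``directly follows from Lemma~4.8'' of the authors' earlier $\mathcal{CONGEST}$-model paper. The only local machinery the paper develops toward it is the preceding lemma, which shows via a minimum-weight-outgoing-edge/fragment-merging argument that each straight path of $T_M$ is a shortest \emph{straight} path between its terminal endpoints. Your proof is self-contained and in fact subsumes that lemma: by establishing the cut-optimality condition $w^*(u,v)\le d(a,b)$ for every terminal pair $(a,b)$ separated by the fundamental cut of $(u,v)$ in $T^*$, and then specializing to $(a,b)=(u,v)$, you get $w^*(u,v)=d(u,v)$, i.e.\ each straight path of $T_M$ is a genuine shortest $u$--$v$ path in $G$, not merely the shortest among straight paths. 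What your approach buys is a clean reduction of ``MST of $K_Z$'' to two standard facts (every tree edge of an MST is minimum across its fundamental cut, and cut optimality characterizes MSTs), plus the single problem-specific inequality $d(u',s(u'))\le d(u',a)$ coming from the SPF source-selection rule; what it costs is a handful of structural claims that you state but should justify explicitly: (i) every MST of $G_c$ contains all $|V|-|Z|$ zero-weight forest edges and no $\infty$-weight intra-tree edges --- this uses that $w>0$ so every inter-tree edge has strictly positive $w_c$, and that $G$ with the intra-tree edges removed is still connected; (ii) contracting the (connected) SPF subtrees of $T_M$ yields a simple tree on $Z$, so the map from inter-tree edges of $T_M$ to edges of $E^*$ is a bijection and the induced cuts on $Z$ and on $V$ agree; and (iii) the identity $w^*(u,v)=w_c(x,y)$ rests on the SPF property that the tree path from $x$ to $u=s(x)$ has length exactly $d(x,u)$. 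None of these is a gap --- each holds for the reasons you implicitly rely on --- but they carry the weight of the correspondence between $T_M$, $T^*$, and $K_Z$ and deserve a sentence each in a full write-up.
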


The correctness of the above lemma directly follows from the correctness of the Lemma~$4.8$ in \cite{Saikia-Karmakar2019}.

\medskip
\noindent
It is obvious that if we unfold the tree $T^* = (Z, E^*, w^*)$ then it transforms to a resultant graph $\mathit{T_Z}$, which satisfies the following properties.
\begin{itemize}
\item For each straight path between $u$ and $v$ in $\mathit{T_Z}$, there exists an edge $e = (u, v) \in E^*$, where $u, v \in Z$ and the length of the straight path between $u$ and $v$ in $\mathit{T_Z}$ is the shortest one between $u$ and $v$ in $\mathit{G}$.
\item All leaves of $\mathit{T_Z}$ are in $\mathit{Z}$.  

\end{itemize} 

Therefore the following theorem holds.

\begin{theorem} \label{g-mst}
The tree $T_Z$ computed by the STCCM-A algorithm is a generalized MST for $Z \subseteq V$ of $G = (V, E, w)$.
\end{theorem}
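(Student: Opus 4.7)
The plan is to verify that $T_Z$ satisfies both defining conditions of a generalized MST for $Z$ in $G$: (i) there exists an MST $T_A$ of the complete distance graph $K_Z$ such that for each edge $(u,v) \in T_A$, the length of the unique path between $u$ and $v$ in $T_Z$ equals $w(u,v)$ in $T_A$; and (ii) all leaves of $T_Z$ are in $Z$. The witness MST will be taken to be $T^* = (Z, E^*, w^*)$ from Lemma~\ref{k-mst}.

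For condition (i), I would argue as follows. By Lemma~\ref{k-mst}, $T^*$ is already an MST of $K_Z$. By construction of $T^*$, each edge $(u,v) \in E^*$ corresponds to a straight path $P_{uv}$ in $T_M$ with $u,v \in Z$. The preceding lemma on straight paths shows that $P_{uv}$ realises the shortest straight path, and in fact the shortest path, between $u$ and $v$ in $G$, so its length in $T_M$ equals $d(u,v)$, which is exactly $w^*(u,v)$. Hence to conclude (i) it is enough to show that the pruning step does not destroy any such $P_{uv}$. This is the main technical point: I would argue that every internal node $x$ of a straight path $P_{uv}$ satisfies the non-pruning condition of step~$4$, namely that $x$ is an intermediate node between two or more terminals in $T_M$ (with $u$ and $v$ themselves serving as the two witness terminals). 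Therefore $x$ is retained, no incident tree edge of $x$ on $P_{uv}$ is marked $\mathit{basic}$, and the full path $P_{uv}$ survives in $T_Z$ with its original length.

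For condition (ii), the pruning rule is explicit: a non-terminal $v$ is kept iff it lies between two or more terminals in the rooted $T_M$; equivalently, rooting $T_M$ at $r_t \in Z$ and gathering parent information, a non-terminal $v$ is kept iff the subtree of $T_M$ rooted at $v$ contains at least one terminal and $v$ also has a terminal reachable via its parent side. Any retained non-terminal therefore has at least two tree-neighbours in $T_Z$, so it cannot be a leaf; hence every leaf of $T_Z$ is a terminal. Combined with (i), this proves $T_Z$ is a generalized MST.

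The hard part will be the formal justification that the local, parent-information-based pruning rule correctly characterizes exactly those non-terminals that lie on some straight terminal-to-terminal path in $T_M$. I would handle it by rooting $T_M$ at $r_t$ and showing, by a short induction on the tree structure, that a non-terminal $v$ lies on a straight path between some pair of terminals in $T_M$ iff (a) the subtree rooted at $v$ contains a terminal and (b) there is a terminal outside that subtree; the rule in step~$4$, applied to the locally reconstructed $T_M$, checks precisely this. Once this equivalence is in hand, both conditions (i) and (ii) follow, completing the proof.
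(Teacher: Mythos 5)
Your proposal is correct and follows essentially the same route as the paper: it takes $T^*$ from Lemma~\ref{k-mst} as the witness MST of $K_Z$, uses the straight-path lemma to identify the edge weights of $T^*$ with path lengths in $T_M$, and argues that the pruning step retains exactly the non-terminals lying on terminal-to-terminal straight paths, yielding both defining properties of a generalized MST. The paper compresses all of this into the assertion that ``unfolding'' $T^*$ obviously yields $T_Z$ with the two required properties, so your write-up simply supplies the details (in particular the rooted-tree characterization of the pruning rule) that the paper leaves implicit.
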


Let $cost(X)$ denotes the sum of weights of all the edges in a graph $X$. Let $T_{opt}$ denotes the optimal ST. Then the following theorem holds.

\begin{theorem}
$cost(T_Z) /cost(T_{opt}) \leq 2(1 - \frac{1}{\ell}) \leq 2(1 - \frac{1}{|Z|})$.
\end{theorem}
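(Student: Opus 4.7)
The argument has two parts. First, I would reduce the question about $cost(T_Z)$ to a question about the minimum spanning tree of the complete distance graph $K_Z$. By Theorem~\ref{g-mst}, $T_Z$ is a generalized MST for $Z$ in $G$: its edges partition into $|Z|-1$ straight paths, one per edge $(u,v)$ of some MST $T^*$ of $K_Z$, and each such straight path has total weight equal to the corresponding edge weight $w^*(u,v)=d_G(u,v)$ in $K_Z$. Summing over these paths gives $cost(T_Z)\le cost(T^*)=cost(\mathrm{MST}(K_Z))$ (any weight reduction during the realization and MST-on-expansion stages of the algorithm can only further decrease the cost, and the pruning step only removes edges). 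Hence it suffices to establish
\[
cost(\mathrm{MST}(K_Z)) \;\le\; 2\bigl(1-\tfrac{1}{\ell}\bigr)\,cost(T_{opt}).
\]

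Second, I would run the classical doubling and Euler-tour construction of Kou--Markowsky--Berman on $T_{opt}$, tuned to exploit the leaves. Duplicate every edge of $T_{opt}$ to obtain an Eulerian multigraph of total weight $2\,cost(T_{opt})$, and let $\tau$ be an Euler tour of it. Because each terminal leaf of $T_{opt}$ has degree $1$ in $T_{opt}$ and hence degree $2$ in the multigraph, $\tau$ visits each of the $\ell$ terminal leaves \emph{exactly once}. Enumerate the leaves as $v_1,v_2,\dots,v_\ell$ in the order they appear along $\tau$, and let $\sigma_i$ be the portion of $\tau$ running from $v_i$ to $v_{i+1}$ (indices modulo $\ell$). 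These $\ell$ segments partition $\tau$, so $\sum_{i=1}^\ell w(\sigma_i) = 2\,cost(T_{opt})$. The triangle inequality for shortest-path distances gives $d_G(v_i,v_{i+1})\le w(\sigma_i)$, so the cyclic sequence $v_1\to v_2\to\cdots\to v_\ell\to v_1$ realizes a Hamilton cycle $C_L$ on the $\ell$ leaves in $K_L\subseteq K_Z$ with $cost(C_L)\le 2\,cost(T_{opt})$. Deleting the heaviest edge of $C_L$, which has weight at least $cost(C_L)/\ell$, leaves a spanning path $P_L$ of the leaves with $cost(P_L)\le (1-\tfrac{1}{\ell})\cdot 2\,cost(T_{opt})$.

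The step I expect to be the main obstacle is the \emph{lift} of this bound from $K_L$ (which spans only the $\ell$ leaves) to $K_Z$ (which must span every terminal, including the internal ones): a priori $P_L$ is not even a feasible spanning structure on $Z$. My plan is to rerun the argument while also inserting each internal terminal at its first appearance inside the relevant $\sigma_i$: at each first visit to some $u\in Z\setminus L$ within $\sigma_i$, split the corresponding edge of $C_L$ through $u$. This yields a Hamilton cycle $C$ on all of $Z$ whose total cost is still bounded by $\sum_i w(\sigma_i)=2\,cost(T_{opt})$ by repeated application of the triangle inequality, and whose $|Z|$ edges are grouped naturally into $\ell$ blocks (one per $\sigma_i$). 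I would then identify the heavy edge to delete inside the block of maximum weight, which, by an averaging argument on the $\ell$ blocks of total weight $2\,cost(T_{opt})$, carries weight at least $\tfrac{2\,cost(T_{opt})}{\ell}$; removing it produces a spanning tree of $K_Z$ of cost at most $(1-\tfrac{1}{\ell})\cdot 2\,cost(T_{opt})$, which dominates $cost(\mathrm{MST}(K_Z))$ and completes the chain. The second inequality $2(1-1/\ell)\le 2(1-1/|Z|)$ is immediate from $\ell\le|Z|$.
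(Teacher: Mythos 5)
Your first paragraph (reducing the claim to $cost(\mathrm{MST}(K_Z))\le 2(1-\frac{1}{\ell})\,cost(T_{opt})$) and your Euler-tour/segment decomposition agree with the paper's argument, but the final deletion step has a genuine gap. After the internal terminals are inserted, the block of $C$ corresponding to a segment $\sigma_i$ may consist of many edges, and an averaging argument over the $\ell$ blocks only lower-bounds the \emph{total} weight of the heaviest block (by $cost(C)/\ell$, which is itself only $\le 2\,cost(T_{opt})/\ell$ since shortcutting may strictly decrease weight); it says nothing about any single edge inside that block. Concretely, if $T_{opt}$ is a path $z_1-u_1-\cdots-u_k-z_2$ of unit edges with every $u_j$ an internal terminal, then $\ell=2$, both blocks have weight $k+1$, and the block containing all the $u_j$ consists of $k+1$ edges of weight $1$ each; deleting one of those leaves a spanning tree of cost $2k+1$, far above the target $(1-\frac{1}{2})\cdot 2(k+1)=k+1$. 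In general, deleting one edge of a Hamiltonian cycle on $Z$ only guarantees $cost(C)-\max_e w(e)\le(1-\frac{1}{|Z|})\,cost(C)\le 2(1-\frac{1}{|Z|})\,cost(T_{opt})$, i.e.\ the second, weaker inequality of the theorem rather than the first.

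The repair is the paper's order of operations: delete the longest leaf-to-leaf segment $\sigma_{i_0}$ from the doubled tour \emph{before} shortcutting. Because every edge of $T_{opt}$ occurs exactly twice in the tour and its two occurrences lie in two distinct simple segments, the remaining open walk still traverses every edge of $T_{opt}$ at least once, hence visits every terminal, and has weight at most $(1-\frac{1}{\ell})\cdot 2\,cost(T_{opt})$. Shortcutting that walk to the terminals in order of first appearance then yields a Hamiltonian path on $Z$ in $K_Z$ of no larger weight, which dominates $cost(\mathrm{MST}(K_Z))$ and, combined with your first paragraph, completes the proof; the final step $2(1-\frac{1}{\ell})\le 2(1-\frac{1}{|Z|})$ from $\ell\le|Z|$ is fine as written.
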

The correctness of the above theorem essentially follows from the correctness of the Theorem 1 of Kou et al. \cite{Kou:1981:FAS:2697742.2698014}. For the sake of completeness here we give the outline of this correctness. Let $T_{opt}$ consists of $q \geq 1$ edges. Then there exists a loop $L = v_0, v_1, v_2, \cdots, v_{2q}(=v_0)$ in $T_{opt}$ in such a way that
\begin{itemize}
	\item every edge in $T_{opt}$ appears exactly twice in $L$. This implies that $cost(L) = 2 \times cost(T_{opt})$.   
	\item every leaf in $T_{opt}$ appears exactly once in $L$ and if $v_i$ and $v_j$ are two consecutive leaves in $L$ then the sub-path connecting $v_i$ and $v_j$ is a simple path. 
\end{itemize} 
Note that the loop $L$ can be decomposed into $\ell$ simple sub-paths (recall that $\ell$ is the number of leaf nodes in the $T_{opt}$), each of them connects two consecutive leaf nodes in $L$. By deleting the longest such simple sub-path from $L$ the remaining path (say $P$) in $L$  satisfies the followings.
\begin{itemize}
	\item every edge in $T_{opt}$ appears at least once in $P$.
	\item $cost(P) \leq (1 - \frac{1}{\ell}) \times cost(L) = (1 - \frac{1}{\ell}) \times 2 \times cost(T_{opt})$. 
\end{itemize}


Now assume that $T_Z$ is a generalized MST for $Z$ of $G$. We know that $T_Z$ realizes the MST say $T_{A}$ of the complete distance graph $K_Z$ for $Z$ of $G$. In other words $cost(T_Z) = cost(T_A)$. Note that each edge $(u, v)$ in $T_A$, where $u, v \in Z$, is a \emph{shortest straight path} between $u$ and $v$ in $G$. This ensures that the weight of an edge $(u, v)$ in $T_A$ is at most the weight of the corresponding simple path between nodes $u$ and $v$ in $P$.  If we consider all the edges of $T_A$ then $cost(T_A) \leq cost(P)$. This concludes that $cost(T_Z) = cost(T_A) \leq cost(P) \leq 2 \times (1 - \frac{1}{\ell}) \times cost(T_{opt})$. 

\bigskip
\noindent
{\bf Deadlock issue.} The STCCM-A algorithm is free from deadlock. Deadlock occurs only if a set of nodes in the system enter into a circular wait state. In step $1$ (SPF construction) of the STCCM-A algorithm, a node uniformly distributes its input (a brief description of the input distribution is given in ~\ref{Censor-Hillel-all-algo}), which is the incident edge weights of it, among a subset of nodes (which are of course withing one hop distances) and never sends resource request message to any other nodes. This ensures that nodes never create circular waiting (or waiting request) for any resource during the execution of the algorithm. In Step $2$, each node independently sends a message (containing its own state) to all of its neighbors only once. Upon receiving messages from all of its neighbors, a node performs some local computation and then terminates itself.  Therefore, in step $2$ nodes are free from any possible circular waiting. The correctness of the deadlock freeness of the step 3 essentially follows from the work of Lotker et al. \cite{Lotker:2005:MST:1085579.1085591}. In step $4$, the pruning operation is performed on a tree structure ($T_M$) and a node never requests and waits for resources holding by any other node in the system. This implies that during the pruning operation, nodes are free from any possible circular waiting. Therefore, the deadlock freeness of all four steps together ensure deadlock freeness of the STCCM-A algorithm.

\section{STCCM-B algorithm} \label{stccm-b-algorithm}
The STCCM-B algorithm is a modified version of the STCCM-A algorithm that computes an ST in $O(S + \log \log n)$ rounds and $O(S(n - t)^2   + n^2)$ messages in the CCM maintaining the same approximation factor as that of the STCCM-A algorithm described in Section~\ref{stccm-algorithm}. Similar to the STCCM-A algorithm the STCCM-B algorithm also has four steps (small distributed algorithms). In the STCCM-B algorithm except the step $1$ all other steps are same as that of the STCCM-A algorithm. Recall that step $1$ of the STCCM-A algorithm is a construction of a SPF of a given input graph $G = (V, E, w)$ with a terminal set $Z \subseteq V$. In contrast for step $1$ of the STCCM-B algorithm, which is also the SPF construction, we adapt the SPF algorithm proposed in \cite{Saikia-Karmakar2019} to the CCM. This new SPF algorithm denoted by SPF-B helps achieving a different round and message complexities for $2(1 - 1/\ell)$-approximate ST construction in the CCM. Specifically for constant or small shortest path diameter networks (where $S = O(\log \log n)$) the STCCM-B algorithm outperforms the STCCM-A algorithm in terms of round complexity in the CCM.

\medskip
\noindent
{\bf The SPF algorithm in \cite{Saikia-Karmakar2019} vs. the SPF-B algorithm.} The SPF algorithm in \cite{Saikia-Karmakar2019} was proposed for the $\mathcal{CONGEST}$ model, whereas our SPF-B algorithm is devised for the CCM. In the SPF algorithm in \cite{Saikia-Karmakar2019} the terminal nodes need to participate in forwarding $\langle update \rangle$ messages until the termination of the algorithm. However in the SPF-B algorithm the terminal nodes participate in forwarding $\langle update \rangle$ messages only once and after that they exchange no further messages in the system. The SPF-B algorithm terminates in $S + 2$ rounds, whereas the SPF algorithm in \cite{Saikia-Karmakar2019} terminates in $S + h$ rounds, where $h$ is the height of a breadth first search tree of the given input graph $G$.  Note here that $h \leq D \leq S$. Despite the same asymptotic round complexities, required by the both algorithms, which is $O(S)$, they incur different message complexities. The SPF-B algorithm incurs a message complexity of $O(S(n - t)^2 + nt)$, where $n > t$, whereas the SPF algorithm in \cite{Saikia-Karmakar2019} has the message complexity of $O(\Delta (n - t) S)$, where $\Delta$ is the maximum degree of a vertex in the input graph $G$.

\bigskip
\subsection{SPF-B algorithm}
In addition to the notations defined in Section~\ref{model-notations} the following notations are used in the description of the SPF-B algorithm.
\begin{itemize}
\item $ts(v)$ denotes the {\em tentative source ID} of a node $v$. 
	\item $td(v)$ denotes {\em tentative shortest distance} from a node $v$ to its $ts(v)$. 
	\item $tf(v)$ denotes the {\em terminal flag} of a node $v$.
	\item $t\pi(v)$ denotes the {\em tentative predecessor} of a node $v$. 
	\item Let $e \in \delta(v)$. Then at node $v$, $tdn(e)$ denotes the {\em tentative shortest distance} of a node incident on the other end of $e$. Similarly at node $v$, $idn(e)$, $tsn(e)$ and $tfn(e)$ denote {\em ID}, {\em tentative source ID} and {\em terminal flag} value respectively of a node incident on the other end of $e$.

\end{itemize}

\medskip
\noindent
{\bf Input specification.}  
\begin{itemize}
\item If $\mathit{v \in Z}$ then $td(v) = 0$, $t\pi(v) = v$, $ts(v) = v$, and  $\mathit{tf(v) = true}$.
\item If $\mathit{v \in V \setminus Z}$ then $td(v) = \infty$, $t\pi(v) = nill$, $ts(v) = nill$, and $\mathit{tf(v) = false}$.
\item The value of $\mathit{ES(e)}$ can be either $\mathit{basic}$ or $\mathit{blocked}$. A node $v$ sets the state of an incident edge $(v, u) \in \delta(v)$ as $\mathit{blocked}$ if $u \in Z$. Otherwise, at node $v$, the edge $(v, u)$ has the state value as $\mathit{basic}$. It is possible for the edge states at the two nodes adjacent to the edge to be temporarily different.
\end{itemize}

\medskip
\noindent
{\bf Output specification.} When the algorithm terminates then $d(v) = td(v)$, $\pi(v) = t\pi(v)$, and $s(v) = ts(v)$ for each node $v \in V$.


\begin{algorithm}[b!]
\caption*{{\bf SPF-B algorithm.} Pseudocode at node $v$ upon receiving a set of messages or no message.} \label{alg:SPF-B}
\begin{flushleft} Let $Z$ be the set of terminals, $U$ be the set of $\langle update \rangle$ messages received by node $v$ in a round.\end{flushleft} 
\begin{algorithmic}[1]
\State {\bf upon receiving no message}		
\If {$v = r$}		\Comment{spontaneous awaken of the special node $r$}
	\ForAll {$\mathit{e \in \delta(v)}$} \Comment{here $\mathit{\delta(v)}$ contains $n - 1$ edges}
		\State {{\bf send} $\langle wakeup \rangle$ on $\mathit{e}$}
	\EndFor
\EndIf

\vspace{2em}
\State {\bf upon receiving $\langle wakeup \rangle$}
\ForAll {$\mathit{e \in \delta(v)}$}
		\State {$\mathit{ES(e) \leftarrow basic;}$}
\EndFor
\If {$\mathit{v \in Z}$}
	\State {$tf \leftarrow true; ts \leftarrow v; t\pi \leftarrow v;td \leftarrow 0;$}
	\ForAll {$\mathit{e \in \delta(v)}$} 
		\State {{\bf send} $\langle update(v, ts, td, tf) \rangle$ on $\mathit{e}$} 
	\EndFor 
\Else
	\State {$tf \leftarrow false;  ts \leftarrow nill; t\pi \leftarrow nill; td \leftarrow \infty;$}
	
\EndIf

\vspace{2em}
\State {\bf upon receiving a set of $\langle update \rangle$ messages} \Comment{$U \neq \phi$}
\ForAll {$\langle update(idn(e), tsn(e), tdn(e), tfn(e)) \rangle \in U$ such that $e \in \delta(v)$ } 
	
		\If {$tfn(e) = true$}
			\State {$ES(e) \leftarrow blocked;$}		
		\EndIf

		\If {$tdn(e) + w(e) < td$} \Comment{update the tentative source, distance, and predecessor}
			\State{$update\_flag \leftarrow true;$} \Comment{$update\_flag$ is a temporary boolean variable}
			\State {$td \leftarrow tdn(e) + w(e); t\pi \leftarrow idn(e); ts \leftarrow tsn(e);$}
		\EndIf
	\EndFor
	\If{$update\_flag = true$}	
		\ForAll {$e \in \delta(v)$ such that $ES(e) \neq blocked$} 
			\State {{\bf send} $\langle update(v, ts, td, tf) \rangle$ on $e$} 
		\EndFor
		\State{$update\_flag \leftarrow false;$}
	\EndIf

\end{algorithmic}
\end{algorithm}

\medskip
\noindent
{\bf Outline of the algorithm.} We assume that the SPF-B algorithm is initiated by a special node $r$. $r$ initiates the algorithm by sending a $\langle wakeup \rangle$ message to all other nodes in the communication network, which is a clique. Upon receiving the $\langle wakeup \rangle$ message, a node $v$ initializes its local variables as shown in the input specification. Now each node $\mathit{v \in Z}$ sends $\langle update \rangle$ messages to all nodes in $V$. After that nodes in $\mathit{Z}$ send or receive no further messages, whereas in each subsequent rounds nodes in $\mathit{V \setminus Z}$ may send or receive $\langle update \rangle$ messages until the termination of the algorithm. Upon receiving a set of $\langle update \rangle$ messages (denoted by $U$) a node $v \in V$ acts as per the following rules.  

\begin{enumerate}[R1.]
	\item if an $\langle update \rangle \in U$ is received through an incident edge $e = (v, u)$ such that $u \in Z$ then it sets $\mathit{ES(e)}$ to $\mathit{blocked}$. 
	\item if $v \in V \setminus Z$ then it computes $w(e) + tdn(e)$ for each $\langle update(idn(e), tsn(e), tdn(e), tfn(e)) \rangle \in U$ and chooses the minimum one, say $w(e') + tdn(e')$ resulted by $\langle update(idn(e'), tsn(e'),  tdn(e'), $ \\ $ tfn(e')) \rangle \in U$. If $td(v) > w(e') + tdn(e')$ then it updates $td(v) = w(e') + tdn(e')$, $ts(v) = tsn(e')$, and $t\pi(v) = idn(e')$.  Otherwise, $td(v)$, $ts(v)$ and $t\pi(v)$ remain unchanged.
	\item if the value of $td(v)$ is updated then it sends $\langle update(v, ts(v), td(v), tf(v)) \rangle$ on all of its incident $\mathit{basic}$ edges. 
\end{enumerate}

\begin{lemma} \label{lemma-spfccm-termination}
The SPF-B algorithm terminates after at most $S + 2$ rounds.
\end{lemma}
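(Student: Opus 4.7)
The plan is to account for rounds 1 and 2 as a setup phase (the wake-up broadcast by $r$; then the initial $\langle update(z, z, 0, \text{true}) \rangle$ broadcast by every $z \in Z$) and then to argue that the $\langle update \rangle$ information propagates along edges of $G$ at a rate of exactly one hop per round, so that every tentative distance has stabilised to its optimum by the end of round $S+2$, after which no further $\langle update \rangle$ message is ever triggered. Edges absent from $G$ carry weight $\infty$ in the clique, so only propagation along $G$-edges of finite weight can strictly improve any $td$ value.

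Concretely, I first establish a monotonicity observation: for every node $v$, $td(v)$ is non-increasing across rounds, and whenever $td(v)$ strictly decreases in some round $k$, the pseudocode immediately broadcasts $\langle update(v, ts(v), td(v), tf(v)) \rangle$ on every basic-edge incident to $v$, so any neighbour processes this updated value in round $k+1$. I then prove by induction on $j \geq 0$ the key claim: for every node $v$ and every terminal $z \in Z$ admitting a shortest $v$-to-$z$ path in $G$ with at most $j$ edges, the inequality $td(v) \leq d(v, z)$ holds at the end of round $j+2$. The base case $j = 0$ is immediate because every $z \in Z$ initialises $td(z) = 0$ in round 2. For the inductive step, I fix such a pair $(v, z)$, pick a hop-minimum shortest path $z = u_0, u_1, \ldots, u_j = v$, and apply the hypothesis to $u_{j-1}$ with the same terminal $z$ on the length-$(j-1)$ prefix. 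By the hypothesis, $td(u_{j-1}) \leq d(u_{j-1}, z)$ at the end of round $j+1$, so by monotonicity $u_{j-1}$ has broadcast an $\langle update \rangle$ carrying a tentative distance at most $d(u_{j-1}, z)$ in some round $r \leq j+1$; node $v$ processes this message in round $r+1 \leq j+2$ and, regardless of whether it strictly improves $td(v)$ or not, emerges with $td(v) \leq d(u_{j-1}, z) + w(u_{j-1}, v) = d(v, z)$.

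Taking the minimum over all $z \in Z$ and using that every node has some shortest path to its nearest terminal of at most $S$ edges (the defining property of the shortest path diameter), I obtain $td(v) = \min_{z \in Z} d(v, z)$ at every node by the end of round $S+2$; the companion variables $ts(v)$ and $t\pi(v)$ are consistent with this value because they are rewritten together with $td(v)$ at each strict decrease. Since $td$ cannot decrease below the true minimum, the rule ``transmit $\langle update \rangle$ only when $td$ strictly decreases'' guarantees that no node transmits any message in rounds beyond $S+2$, which yields the claimed termination bound. The most delicate point of the argument will be the inductive step in the presence of multiple, possibly stale, $\langle update \rangle$ messages arriving at the same node over successive rounds; the monotonicity observation, together with the fact that every strict decrease re-broadcasts the improved value in the same round it is produced, is precisely what prevents a correct value from being silently retained and matches the one-hop-per-round propagation rate to the bound $S$ exactly.
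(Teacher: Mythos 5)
Your proposal is correct and follows essentially the same route as the paper's proof: two rounds of setup (the $\langle wakeup \rangle$ broadcast and the terminals' initial $\langle update \rangle$ broadcast) followed by at most $S$ further rounds of Bellman--Ford-style relaxation, justified by the fact that every node has a shortest path to its nearest terminal using at most $S$ edges. Your explicit induction on the hop count $j$, together with the monotonicity and lower-bound observations, merely makes rigorous what the paper asserts informally ("each such node converges to its correct $d(v)$ value in at most $S$ additional rounds"), so no substantive difference or gap to report.
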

\begin{proof}
The special node $r$ initiates the SPF-B algorithm by sending $\langle wakeup \rangle$ messages to all the nodes in $V$ which takes exactly $1$ round. Upon receiving $\langle wakeup \rangle$ message each node in $\mathit{Z}$, in parallel, sends $\langle update \rangle$ messages to each node in $V$. This takes $1$ round only. After that all nodes in $\mathit{V \setminus Z}$ proceed in parallel, and if applicable they update their local states. Note that in each subsequent round at least one node in $\mathit{V \setminus Z}$ must update its local variables, otherwise, the algorithm must have reached the termination state. Since $S$ is the shortest path diameter any path in the SPF contains no more than $S$ edges. Since all  nodes in $\mathit{V \setminus Z}$ update their local states in parallel, each such  node $v$ converges to its correct $d(v)$ value in at most $S$ additional rounds. Therefore after $S + 2$ rounds of execution no local changes occur at any node in the system and consequently no further messages related to the SPF-B algorithm will be sent or in transit. This concludes that the SPF-B algorithm terminates after at most $S + 2$ rounds. 
\end{proof}

\begin{theorem}
\label{spfccm-time-complexity}
The round complexity of the SPF-B algorithm is $O(S)$.
\end{theorem}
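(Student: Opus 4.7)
The plan is to derive the theorem as a direct corollary of Lemma~\ref{lemma-spfccm-termination}. That lemma already establishes the hard part, namely a concrete upper bound of $S + 2$ rounds on the execution time of the SPF-B algorithm. Since the theorem asks only for an asymptotic bound in $S$, it suffices to translate the additive constant into $O(\cdot)$ notation. Specifically, I would invoke Lemma~\ref{lemma-spfccm-termination} to conclude that the algorithm terminates in at most $S + 2$ rounds, and then observe that $S + 2 = O(S)$ whenever $S \geq 1$, which is always the case for a connected graph with more than one node (the shortest path diameter is at least $1$). For the degenerate case $S = 0$, which arises only when every node is a terminal or the graph is a single vertex, the algorithm still finishes in at most the two constant rounds needed for the $\langle wakeup \rangle$ broadcast and the single round of $\langle update \rangle$ messages from terminals, so the bound $O(S + 1) = O(1)$ holds trivially.

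Concretely, the proof I would write has a single line: by Lemma~\ref{lemma-spfccm-termination}, the SPF-B algorithm terminates within $S + 2$ rounds, hence its round complexity is $O(S)$. I do not anticipate any obstacle here, since all the real work, namely accounting for the single $\langle wakeup \rangle$ round, the single round in which every terminal broadcasts its initial $\langle update \rangle$, and the at most $S$ subsequent rounds during which each non-terminal converges to its correct shortest distance, has already been absorbed into the statement of the lemma. The only thing to check is that the additive constant $2$ is indeed absorbed by the $O(\cdot)$; this is immediate.
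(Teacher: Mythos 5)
Your proposal is correct and matches the paper's own proof, which likewise cites Lemma~\ref{lemma-spfccm-termination} to get the $S+2$ bound and immediately concludes $O(S)$. Your extra remark about the degenerate case $S=0$ is a harmless addition that the paper does not bother with.
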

\begin{proof}
By Lemma~\ref{lemma-spfccm-termination} the SPF-B algorithm terminates after at most $S + 2$ rounds. Therefore the overall round complexity of the SPF-B algorithm is $O(S)$. 
\end{proof}

\begin{theorem}
\label{spf-message-complexity}
The message complexity of the SPF-B algorithm is $O(S(n - t)^2 + nt)$, where $n > t$.
\end{theorem}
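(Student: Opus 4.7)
The plan is to partition every message sent during an execution of the SPF-B algorithm into three disjoint groups and bound each separately: (i) the $\langle wakeup \rangle$ messages originating at the special node $r$, (ii) the initial $\langle update \rangle$ messages sent by the terminals in the round immediately after wakeup, and (iii) the $\langle update \rangle$ messages sent by the non-terminals in all subsequent rounds.

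Groups (i) and (ii) are routine. The special node $r$ transmits $\langle wakeup \rangle$ once along each of its $n-1$ clique edges, contributing $O(n)$ messages. In the next round every terminal $z \in Z$, upon receiving $\langle wakeup \rangle$, sends an $\langle update \rangle$ on each of its $n-1$ incident clique edges and, by construction of the algorithm, never sends another message afterwards. Hence group (ii) contributes exactly $t(n-1) = O(nt)$ messages.

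The bulk of the work is bounding group (iii), and the key observation is the edge-blocking mechanism. Whenever a non-terminal $v$ receives an $\langle update \rangle$ carrying $tfn(e) = \mathit{true}$ through an incident edge $e$, the pseudocode sets $ES(e) \leftarrow \mathit{blocked}$. In the terminal-broadcast round every terminal transmits to every non-terminal, so by the end of that round each non-terminal $v$ has marked all $t$ of its edges to terminals as $\mathit{blocked}$. In every subsequent round $v$ therefore sends $\langle update \rangle$ messages only on its remaining $n-t-1$ basic edges, which all lead to non-terminals. Furthermore, a non-terminal performs at most one broadcast burst per round, because all $\langle update \rangle$ messages arriving in a single round are processed jointly and trigger at most one assignment that sets $update\_flag \leftarrow true$. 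By Lemma~\ref{lemma-spfccm-termination} at most $S$ such non-terminal update rounds can occur before termination, giving each of the $n-t$ non-terminals a budget of at most $S(n-t-1)$ messages. Summing over all non-terminals yields $(n-t)\cdot S \cdot (n-t-1) = O(S(n-t)^2)$ messages for group (iii).

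Adding the three bounds gives $O(n) + O(nt) + O(S(n-t)^2) = O(S(n-t)^2 + nt)$, using $n \le nt$ whenever $t \ge 1$. The only non-routine ingredient is the blocking argument that lowers the per-round cost at each non-terminal from $n-1$ down to $n-t-1$; the rest is bookkeeping that follows immediately from the round bound of Lemma~\ref{lemma-spfccm-termination} and from the fact that a node's local state can change at most once per round.
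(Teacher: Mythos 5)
Your proof is correct and follows essentially the same route as the paper's: both decompose the message count into the wakeup broadcast ($O(n)$), the one-shot terminal broadcast ($O(nt)$), and at most $S$ subsequent rounds in which each of the $n-t$ non-terminals sends on fewer than $n-t$ edges, yielding $O(S(n-t)^2)$. Your explicit justification via the edge-blocking mechanism of why later $\langle update \rangle$ traffic stays within $V \setminus Z$ is a detail the paper asserts without elaboration, but the argument is the same.
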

\begin{proof}
The special node $r$ initiates the SPF-B algorithm by sending $\langle wakeup \rangle$ messages to all the nodes in $V$, which generates exactly $n$ messages (we assume that a node can send message to itself too).  Upon receiving $\langle wakeup \rangle$ message each node in $\mathit{Z}$, in parallel, sends $\langle update \rangle$ messages to each node in $V$. This step generates $nt$ messages, where $t = |Z|$. After that no further messages are generated due to the node set $\mathit{Z}$. On the other hand in each subsequent round each node in $\mathit{V \setminus Z}$ may update its local variables and sends $\langle update \rangle$ message to each node in $\mathit{V \setminus Z}$. In the worst-case in each subsequent round this may generates $(n - t)^2$ messages. By Lemma~\ref{lemma-spfccm-termination} the SPF-B algorithm terminates in $S + 2$ rounds. Combining all the steps  the total number of messages generated, in the worst-case, is $(S + 2)(n - t)^2 + nt + n$. We assume that $n > t$. Therefore the overall message complexity of the SPF-B algorithm is $O(S(n - t)^2 + nt)$.  
\end{proof}

\begin{lemma}\label{correctness-spf}
Let $td_i(v)$ be the length of the tentative shortest path from node $v$ to its $ts(v)$ after $i \geq 0$ rounds. Let the SPF-B algorithm terminates after $X \leq S + 2$ rounds. Then $td_{X}(v) = d(v, s(v))$ for each node $v \in V$ that uses $\leq S$ edges.
\end{lemma}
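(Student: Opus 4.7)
My plan is a synchronous Bellman--Ford style analysis. Writing $D_i(v)$ for the length of the shortest walk in $G$ from $v$ to any terminal that uses at most $i$ edges (with $D_i(v) = \infty$ when no such walk exists), I would prove by induction on $i$ that
\[
  td_{i+2}(v) \;\leq\; D_i(v) \qquad \text{for every } v \in V \text{ and every } i \geq 0.
\]
Two invariants of the pseudocode support this induction and will be established first. \emph{(i) Monotonicity:} $td_i(v)$ is non-increasing in $i$, since the algorithm modifies $td$ only when the candidate $tdn(e) + w(e)$ is strictly smaller than the current value. \emph{(ii) Lower bound:} whenever $td_i(v)$ is finite, it equals the length of some bona-fide path in $G$ from $v$ to a terminal, and so $td_i(v) \geq d(v, s(v))$, since $s(v)$ is by definition the closest terminal to $v$. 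Together these imply that once $td_i(v)$ reaches $d(v, s(v))$ it stays there.

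The base case $i = 0$ follows directly from the $\langle wakeup \rangle$ handler: $td_2(v) = 0 = D_0(v)$ for $v \in Z$ and $td_2(v) = \infty = D_0(v)$ otherwise. For the inductive step, assume $td_{i+2}(u) \leq D_i(u)$ for every $u$, and fix a neighbor $u$ of $v$ in $G$. If $td_{i+2}(u)$ is finite, let $r \leq i + 2$ be the last round in which $td(u)$ strictly decreased; at round $r$ the algorithm dispatches $\langle update \rangle$ on every non-blocked incident edge of $u$, and the edge $(u,v)$ is non-blocked at $u$ unless $v \in Z$ (in which case $v$ is a terminal and $td_j(v) = 0 = d(v, s(v))$ for every $j \geq 2$, so there is nothing further to show). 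Thus $v$ processes the message at round $r + 1 \leq i + 3$, from which the update rule gives $td_{i+3}(v) \leq td_r(u) + w(u,v) = td_{i+2}(u) + w(u,v) \leq D_i(u) + w(u,v)$. Combined with $td_{i+3}(v) \leq td_{i+2}(v) \leq D_i(v)$ from monotonicity and the inductive hypothesis applied to $v$, and taking the minimum over neighbors $u$, we obtain $td_{i+3}(v) \leq D_{i+1}(v)$, completing the induction.

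To conclude, specialising to $i = S$ and using that every node's shortest path to $s(v)$ uses at most $S$ edges, $D_S(v) = d(v, s(v))$, so the induction combined with (ii) gives $td_{S+2}(v) = d(v, s(v))$. Since $X \leq S + 2$ by Lemma~\ref{lemma-spfccm-termination} and termination at round $X$ means no $td$ value changes after round $X$, we have $td_X(v) = td_{X+1}(v) = \cdots = td_{S+2}(v) = d(v, s(v))$, as required. The principal technical obstacle will be the alignment step within the inductive argument: because $u$ may refine $td(u)$ several times and messages are triggered only by strict decreases, one must argue precisely that the last message $u$ sends to $v$ before round $i + 2$ carries exactly $td_{i+2}(u)$ and is received in the next round, and that earlier stale messages cannot push $v$ away from its correct value (the latter follows from the strict-less-than comparison in the update rule, which ensures $v$ never raises its $td$ based on an old higher value). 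Everything else reduces to routine Bellman--Ford bookkeeping.
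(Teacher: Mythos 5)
Your proposal is correct, but it is worth noting that the paper does not actually prove this lemma at all: it simply asserts that the claim ``directly follows from the correctness of the Lemma~3.3 in [Saikia--Karmakar 2019]'', i.e.\ it outsources the argument to the CONGEST-model predecessor of SPF-B. Your self-contained synchronous Bellman--Ford induction ($td_{i+2}(v)\le D_i(v)$, combined with monotonicity and the lower bound $td_i(v)\ge d(v,s(v))$) is exactly the kind of argument that citation is standing in for, and it correctly accounts for the two-round offset (wakeup in round~1, terminal broadcast in round~2) and for the fact that edges into $Z$ become \emph{blocked}, so the relaxation only propagates through $V\setminus Z$ after round~2. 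What your version buys is a proof that actually matches the SPF-B pseudocode as written for the clique (terminals broadcast exactly once; stale higher values are harmless because of the strict-less-than test), rather than relying on a lemma proved for a structurally different message-forwarding scheme. Two small points to tighten: (a) for $u\in Z$ the phrase ``last round in which $td(u)$ strictly decreased'' does not literally apply (terminals are initialized to $0$ and never decrease), so the terminal case should be dispatched separately by observing that every terminal sends its $\langle update\rangle$ with value $0$ on all $n-1$ clique edges in round~2; (b) in invariant (ii) the finite value $td_i(v)$ witnesses a \emph{walk} to a terminal rather than necessarily a simple path, which is still enough since edge weights are positive and any walk to any terminal has length at least $d(v,s(v))$. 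Neither point is a gap in substance.
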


The correctness of the above lemma directly follows from the correctness of the Lemma $3.3$ in \cite{Saikia-Karmakar2019}.

\subsection{Proof of the STCCM-B algorithm}
Similar to the STCCM-A algorithm the STCCM-B algorithm also computes a generalized MST for $\mathit{Z}$ of $\mathit{G}$. Therefore the approximation factor of an ST computed by the STCCM-B algorithm is same as that of the STCCM-A algorithm, which is $2(1 - 1/\ell)$.

\begin{theorem} \label{m-stccm-round-complexity}
The round complexity of the STCCM-B algorithm is $O(S + \log \log n)$.
\end{theorem}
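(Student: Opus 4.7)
The plan is to prove the round complexity bound by analyzing each of the four steps of the STCCM-B algorithm separately and then summing their round complexities. Since the STCCM-B algorithm is sequential in its steps (the special node $r$ enforces an ordered execution, initiating step $i+1$ only after step $i$ terminates), the total round complexity is simply the sum of the round complexities of the four constituent steps.

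First, I would invoke Theorem \ref{spfccm-time-complexity} to conclude that step $1$ (the SPF construction via the SPF-B algorithm) requires $O(S)$ rounds. Next, I would observe that step $2$ (edge weight modification) requires only $O(1)$ rounds, since this step is identical to step $2$ of the STCCM-A algorithm, where each node sends a single $\langle set\_category \rangle$ message to its neighbors on each incident edge exactly once, after which every node can locally classify each of its incident edges and assign the appropriate modified weight. For step $3$ (MST construction on $G_c$), I would cite the deterministic MST algorithm of Lotker et al.\ \cite{Lotker:2005:MST:1085579.1085591}, whose round complexity in the CCM is $O(\log \log n)$. Finally, for step $4$ (pruning), I would argue as in the STCCM-A analysis that the step runs in $O(1)$ rounds: every node broadcasts its parent-in-$T_M$ information once (taking $O(1)$ rounds in the CCM), after which each non-terminal locally reconstructs $T_M$ and decides in parallel whether to prune itself, with the pruning messages on any pruned edge requiring only $O(1)$ additional rounds.

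Summing these contributions gives a total round complexity of $O(S) + O(1) + O(\log \log n) + O(1) = O(S + \log \log n)$, which establishes the theorem.

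There is essentially no conceptual obstacle here, since each step's round complexity is either already proved (Theorem \ref{spfccm-time-complexity}) or directly inherited from a cited result (Lotker et al.) or from the STCCM-A analysis (steps $2$ and $4$). The only mild care needed is in justifying that step $3$ indeed takes $O(\log \log n)$ rounds when applied to the transformed graph $G_c$ rather than the original input graph: this is immediate because Lotker et al.'s algorithm works on any weighted input on the clique, and $G_c$ lives on the same node set with weights that every node already knows on its incident edges after step $2$. Thus the main work in the proof is bookkeeping rather than a new technical argument.
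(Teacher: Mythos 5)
Your proposal is correct and follows essentially the same route as the paper's proof: both decompose the bound step by step, invoking Theorem~\ref{spfccm-time-complexity} for the $O(S)$ cost of step $1$, the Lotker et al.\ bound of $O(\log\log n)$ for step $3$, and $O(1)$ for steps $2$ and $4$, then sum. Your additional remarks justifying why steps $2$--$4$ carry over from the STCCM-A analysis are sound but not a different argument, just a more detailed write-up of the same bookkeeping.
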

\begin{proof}
We know that the STCCM-B algorithm consists of four steps. By Theorem~\ref{spfccm-time-complexity} the round complexity of the step $1$ is $O(S)$. The round complexities of the steps $2$, $3$, and $4$ of the STCCM-B algorithm are $O(1)$, $O(\log \log n)$, and $O(1)$ respectively. It is clear that the overall round complexity of the STCCM-B algorithm is dominated by the steps $1$ and $3$, which is $O(S + \log \log n)$.
\end{proof}

\begin{theorem} \label{m-stccm-message-complexity}
The message complexity of the STCCM-B algorithm is $O(S (n - t)^2 + n^2)$.
\end{theorem}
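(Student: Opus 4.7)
The plan is to bound the message complexity of the STCCM-B algorithm by summing the message complexities of its four constituent steps and then simplifying, using the fact that all other terms are dominated by $S(n-t)^2$ and $n^2$. Since the STCCM-B algorithm differs from STCCM-A only in step $1$ (which uses the SPF-B subroutine instead of SPF-A), I would reuse the message bounds already established for steps $2$, $3$, and $4$ in the analysis of STCCM-A.

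First, I would invoke Theorem~\ref{spf-message-complexity}, which gives the message complexity of step $1$ (SPF-B) as $O(S(n-t)^2 + nt)$. Next, from the description of step $2$ in the STCCM-A outline, each node sends the $\langle set\_category \rangle$ message exactly twice per edge of $G$, yielding $O(m)$ messages. For step $3$, I would cite the Lotker et al. MST algorithm, which incurs $O(n^2)$ messages in the CCM. For step $4$ (pruning), the parent-information broadcast takes $O(n^2)$ messages and the subsequent local pruning notifications add only $O(n)$ messages, so step $4$ contributes $O(n^2)$ in total. Summing these gives a total of $O(S(n-t)^2 + nt + m + n^2)$.

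The final simplification step uses the elementary bounds $m \leq \binom{n}{2} = O(n^2)$ and $nt \leq n^2$ (since $t \leq n$), so both the $nt$ term from SPF-B and the $m$ term from step $2$ are absorbed into $O(n^2)$. Hence the overall message complexity reduces to $O(S(n-t)^2 + n^2)$, matching the statement of the theorem.

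There is no significant obstacle in this argument; the proof is essentially a bookkeeping exercise that adds the four per-step message bounds and drops dominated terms. The only point requiring mild care is making sure the step $2$ analysis from the STCCM-A outline carries over verbatim to STCCM-B (it does, since the SPF output format and the edge classification rules are identical), and likewise that steps $3$ and $4$ operate on the same modified graph $G_c$ and MST $T_M$ as in STCCM-A, so Lotker et al.'s $O(n^2)$ bound and the $O(n^2)$ pruning bound apply unchanged.
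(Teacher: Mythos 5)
Your proposal is correct and matches the paper's own proof essentially verbatim: both sum the per-step bounds $O(S(n-t)^2+nt)$, $O(m)$, $O(n^2)$, $O(n^2)$ and absorb $m$ and $nt$ into $O(n^2)$. No further comment is needed.
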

\begin{proof}
By Theorem~\ref{spf-message-complexity} the message complexity of the step 1 of the STCCM-B algorithm is $O(S (n - t)^2 + nt)$. The message complexities of the steps $2$, $3$, and $4$ of the STCCM-B algorithm are $O(m)$, $O(n^2)$, and $O(n^2)$ respectively. We know that $m \leq n^2$ and $nt < n^2$, where $n > t$. Therefore the overall message complexity of the STCCM-B algorithm is $O(S (n - t)^2 + n^2)$.
\end{proof}

\section{Discussion and Conclusion} \label{conclusion}
\vspace{-.5em}
In this paper we have presented two deterministic distributed approximation algorithms for the ST problem in the CCM. The first one constructs a $2(1 - 1/\ell)$-approximate ST in $\tilde{O}(n^{1/3})$ rounds and $\tilde{O}(n^{7/3})$ messages and the polylogarthmic factors involved with each of the round and message complexities is $\log n$. The second one computes a $2(1 - 1/\ell)$-approximate ST in $O(S + \log \log n)$ rounds and $O(S (n - t)^2 + n^2)$ messages. Note that if a graph has the property  $S = \omega(n^{1/3} \log n)$, the first algorithm shows a better performance in terms of round complexity in contrast to the second one. On the other hand if a graph has the property $S = \tilde{o}(n^{1/3})$, then the second algorithm outperforms the first one in terms of the round complexity. Furthermore we have also observed that for constant or sufficiently small shortest path diameter networks (where $S = O(\log \log n)$) the second algorithm computes a $2(1 - 1/\ell)$-approximate ST in $O(\log \log n)$ rounds and $\tilde{O}(n^2)$ messages in the CCM. This result almost coincide with the best known deterministic result for MST construction in the CCM due to Lotker et al. \cite{Lotker:2005:MST:1085579.1085591} and the approximation factor of the resultant ST is at most $2(1 - 1/\ell)$ of the optimal.

It is known that in the centralized setting, the ST problem can be approximated upto a factor of $\ln (4) + \epsilon \approx 1.386 + \epsilon$ (for $\epsilon > 0$) of the optimal \cite{Byrka:2010:ILA:1806689.1806769}. Pemmaraju and Sardeshmukh \cite{pemmaraju_et_al:LIPIcs:2016:6882} showed that there exists a randomized algorithm that computes an MST in $O(\log^* n)$ rounds and $o(m)$ messages in the CCM. Recently  Jurdzi\'{n}ski and Nowicki \cite{Jurdzinski:2018:MOR:3174304.3175472} achieved a randomized algorithm  in the CCM that computes an MST in $O(1)$ rounds. Since the ST problem is a generalized version of the MST problem, there is an open research direction on  improvement of the approximation factor, the round and the message complexities in the CCM.




\newpage

\appendix

\section{Censor-Hillel et al.'s APSP algorithm in the CCM \cite{Censor-Hillel:2015:AMC:2767386.2767414}} \label{Censor-Hillel-all-algo}
\vspace{-.5em}
Censor-Hillel et al. showed that the APSP of a given graph $G = (V, E, w)$ can be computed via iterated squaring of the weight matrix over the min-plus semiring \cite{Fischer:1971:BMM:1446293.1446319,Munro:1971:EDT:2598952.2599354}.  Let $W$ be the weight matrix of size $n \times n$ for a given graph $G$, where $n = |V|$. Then the distance product which is also known as the min-plus product is defined as follows.

$(W \star W)_{uv} = W^2_{uv} = \smash{\displaystyle\min_{w}} (W_{uw} + W_{wv})$

where $u, v, w \in V$. The $n^{th}$ distance product denoted by $W^{n}$ produces the actual shortest path distances in $G$, i.e., $\delta(u, v) = W^{n}_{uv}$ for each pair of vertices $u, v \in V$. The distance product $W^n$ can be computed by iteratively squaring $W$ for $\lceil \log n \rceil$ times. The distributed squaring of a weight matrix over the min-plus semiring is similar to the parallel 3D matrix multiplication. The squaring of $W$ over the min-plus semiring involves $n^3$ element wise additions of the form $W_{uw} + W_{wv}$, where $u, v, w \in V$. This is viewed as a cube of size $n \times n \times n$ where each point in the cube represents an addition operation. If we partition the cube of size $n \times n \times n$ into $n$ sub-cubes of equal sizes then each of them gets $n^{2/3} \times n^{2/3} \times n^{2/3}$ points of the cube. This indicates that each node $v \in V$ needs to perform $n^{2/3} \times n^{2/3} \times n^{2/3}$ addition operations. Note that the weight matrix $W$ has  $n^2$ entries corresponding to the edge weights of the input graph.\footnote{If an edge $(u, v)$ is not in the input graph $G$, then $w((u, v))$ is considered equal to $\infty$.} This $n^2$ entries are distributed uniformly among $n$ nodes in such a way that each node $v \in V$ gets two sub-matrices of sizes $n^{2/3} \times n^{2/3}$. This ensures that each node needs to compute the min-plus product of the matrix size $n^{2/3} \times n^{2/3}$ only.

Now we briefly describe the APSP algorithm due to Censor-Hillel et al. \cite{Censor-Hillel:2015:AMC:2767386.2767414}. Initially each node $u \in V$ knows the edge weights $w((u, v))$ for each $v \in V$. Here it is assumed that $n^{1/3}$ is a positive integer, where $n = |V|$. Each node $v \in V$ partitions its input  into $n^{1/3}$ blocks, each one has $n^{2/3}$ entries. Now each node $v \in V$ sends its each block to $2n^{1/3}$ nodes in such a way that each node $v \in V$ receives two sub-matrices, each of them has size $n^{2/3} \times n^{2/3}$. Since each block has size $n^{2/3}$ and there are $2n^{2/3}$ recipients, a total of $2n^{4/3}$ messages are sent by each node. This distribution is performed in $O(n^{1/3})$ rounds and $O(n^{7/3})$ messages by using the deterministic routing scheme of Lenzen \cite{Lenzen:2013:ODR:2484239.2501983}. Then each node computes the min-plus product from the two known sub-matrices. After that all the resulting min-plus products are re-distributed among $n$ nodes in such a way that each node $v \in V$ receives $n^{4/3}$ values. This distribution of the min-plus products is also performed in $O(n^{1/3})$ rounds $O(n^{7/3})$ messages by using the deterministic routing scheme of Lenzen \cite{Lenzen:2013:ODR:2484239.2501983}. The received $n^{4/3}$ messages by a node $v \in V$ contain the information related to the row $v$ of the resulting min-plus product $W^2 = W \star W$ from which $v$ locally computes the values of its own row with respect to $W^2$. With this the first iteration ends. Considering that each of the node IDs and edge weights can be encoded by using $O(\log)$ bits, the round and message complexities of the this iteration are $O(n^{1/3})$ and $O(n^{7/3})$ respectively. Repetition of the above procedure for $\lceil \log n \rceil$ times on each of the resulting min-plus product guarantees the final output $W^n$ which is the required APSP for a given graph $G$. The overall round and message complexities of this algorithm are $O(n^{1/3} \log n)$ and $O(n^{7/3} \log n)$ respectively.

\medskip
{\bf Routing table construction}. The task of routing table construction concerns computing local tables at all the nodes of a network in which each node $u$, when given a destination node $v$, knows the next hop through which $u$ is connected to $v$ by the shortest path distance in the network. Specifically the routing table entry $R[u, v] = w \in V$ is a node such that $(u, w) \in E$ and $w$ lies on a shortest path from $u$ to $v$. Censor-Hillel et al. \cite{Censor-Hillel:2015:AMC:2767386.2767414} showed that iterated squaring of the weight matrix over the min-plus semiring can also be used to construct the routing table for each node $v \in V$. The min-plus product $W \star W$ provides a \emph{witness matrix} $Q$ such that if $Q_{uv} = w$, then $(W \star W)_{uv} = W_{uw} + W_{wv}$. Whenever the iterated squaring algorithm computes the min-plus product $W^{2i} = W^i \star W^i$, then using the witness matrix $Q$ the routing table $R$ is updated to $R[u, v] = R[u, Q_{uv}]$ for each $u, v \in V$ with $W^{2i}_{uv} < W^i_{uv}$.

\section{Lotker et al's MST algorithm in the CCM \cite{Lotker:2005:MST:1085579.1085591}} \label{Lotker-algo}
The algorithm operates in phases. Each phase takes $O(1)$ rounds. In each phase $k \geq 0$ it maintains a set of clusters $\mathcal{F}^k = \{\mathit{F}^k_1, \mathit{F}^k_2,.....\mathit{F}^k_p\}$, $\bigcup_{i} \mathit{F_i^k = V}$, where $\mathit{V}$ is the vertex set of the input graph $\mathit{G}$.  For the sake of simplicity it is assumed that at the beginning of phase $1$, the end of the imaginary phase $0$, with the cluster set $\mathcal{F}^0 = \{\mathit{F}^0_1, \mathit{F}^0_2,.....\mathit{F}^0_n\}$ is known, where $\mathit{F}^0_i = \{v_i\}$ for every $1 \leq i \leq n$. For each cluster $\mathit{F} \in \mathcal{F}^k$, the algorithm selects a spanning subtree $\mathit{T(F)}$. At the beginning of phase $k > 0$, the cluster set $\mathcal{F}^{k - 1}$ and the corresponding subtree collection $\mathcal{T}^{k - 1} = \{ \mathit{T(F)} | \mathit{F} \in \mathcal{F}^{k - 1} \}$, including the weights of the edges in those subtrees, are known to every vertex in the graph. Whenever the algorithm terminates each node in $\mathit{V}$ knows  all the $n - 1$ edges in the MST of the given graph $\mathit{G}$.

The outline of a phase $k > 0$ is as follows. Initially each cluster $\mathit{C} \in \mathcal{F}^{k - 1}$ is contracted to a vertex $\mathit{v_C}$. All these contracted vertices together form a smaller logical graph denoted by $\mathit{\hat{G}}$. The operation of each vertex $\mathit{v_C}$ is carried out by a special node called the leader of $\mathit{C}$ denoted by $\mathit{l(C)}$.  Let $\mathit{N}$ be the minimum size cluster in $\mathcal{F}^{k - 1}$. At the beginning of phase $k > 0$, $\mathit{N}$ (or more) the members of each cluster $\mathit{C} \in \mathcal{F}^{k - 1}$ collect $\mathit{N}$ lightest edges connecting their cluster $\mathit{C}$ to other clusters in $\mathcal{F}^{k - 1} \setminus \{\mathit{C}\}$. Then each cluster in $\mathcal{F}^{k - 1}$ sends these $\mathit{N}$ (or more) lightest edges to a special node $v_0$ (node with the lowest ID) of the graph by appropriately sharing the workload among the nodes of the cluster. Now the node $v_0$ has a partial picture of the logical graph $\mathit{\hat{G}}$, consisting of all the contracted vertices $\mathit{v_C}$ but only some of the edges connecting them, specifically $\mathit{N}$ lightest edges emanating from each node of $\mathit{\hat{G}}$ to $\mathit{N}$ different nodes. On the basis of these received information $v_0$ performs (locally) fragment merging operations. For that the known edges are sorted in non decreasing order of their weights. Then from the non decreasing ordered edge set add edges to $\mathit{\hat{G}}$ \ that are the minimum weight outgoing edge (MWOE) of one of two fragments they connect as long as merging is perfectly safe. The \emph{safety rule} is as follows:  It is perfectly safe to continue merging a fragment $\mathit{F}$ in the logical graph $\mathit{\hat{G}}$ as long as the $\mathit{N}$ lightest edges of each vertex $\mathit{v_C}$ in $\mathit{F}$ are not inspected. It is shown that the safety rule allows to grow each of the fragments to contain at least $\mathit{N} + 1$ vertices of $\mathit{\hat{G}}$. This ensures that the size of each of the clusters in the next phase will be at least $\mathit{N}^2$. This is a \textit{quadratic} growth of the clusters. Finally $v_0$ sends out the locally known identity of all the edges that are newly chosen. Note that the number of such edges can be at most $n - 1$. The special node $v_0$ performs this in $O(1)$ rounds by sending each edge to a different intermediate node, which will broadcast that edge to all other nodes.

It is clear that at the end of a phase $k > 0$, each of the cluster sizes is at least $2^{2^{k-1}}$. Whenever the algorithm terminates, there exists only one cluster in the cluster set which is the required MST of $\mathit{G}$, i.e., $|\mathcal{F}^k| = 1$. Therefore ${2^{2^{k - 1} }} = n \Rightarrow k = \log \log n + 1$. This ensures that the algorithm terminates in $O(\log \log n)$ rounds, whereas the message complexity is $O(n^2)$.




\end{document}